\newcommand{\be}{\begin{equation}}
\newcommand{\ee}{\end{equation}}
\newcommand{\ba}{\begin{array}}
\newcommand{\ea}{\end{array}}
\newcommand{\bea}{\begin{eqnarray}}
\newcommand{\eea}{\end{eqnarray}}
\newcommand{\calB}{{\cal B }}
\newcommand{\calG}{{\cal G }}
\newcommand{\ZZ}{\mathbb{Z}}
\newcommand{\nn}{\nonumber}
\newtheorem{dfn}{Definition}
\newtheorem{lemma}{Lemma}
\newtheorem{prop}{Proposition}
\newtheorem{theorem}{Theorem}
\begin{document}

\title{On the energy landscape of 3D spin Hamiltonians with topological order}

\author{Sergey \surname{Bravyi}}
\affiliation{IBM Watson Research
Center,  Yorktown Heights,  NY 10598}
\author{Jeongwan \surname{Haah}}
\affiliation{Institute for Quantum Information, California Institute of Technology, Pasadena, CA 91125}

\date{20 May 2011}

\begin{abstract}

We explore feasibility of a quantum self-correcting memory based on 3D spin Hamiltonians
with topological quantum order in which thermal diffusion of topological defects is suppressed
by macroscopic energy barriers. To this end we  characterize the energy landscape of  stabilizer code Hamiltonians
with local bounded-strength interactions which have a
topologically ordered ground state but do not have string-like logical operators. We prove that any sequence of local errors mapping a ground state of such Hamiltonian to an orthogonal ground state must cross an energy barrier growing at least as a logarithm of the lattice size.
Our bound on the energy barrier is shown to be tight up to a constant factor for one particular 3D spin Hamiltonian.
\end{abstract}

\pacs{03.67.Pp, 03.67.Ac, 03.65.Ud}

\maketitle
Topologically ordered phases of matter  display a variety of fascinating properties having no counterpart in the classical
physics. Most notable ones are topological invariants
such as the the Hall conductance~\cite{Girvin:1986}, ground state degeneracy~\cite{Einarsson:1990},
and topological entanglement entropy~\cite{LevinWen:2006,KitaevPreskill:2006}
which are insensitive to generic local perturbations~\cite{Wen:1990,Kitaev:1997,Bravyi:2010}.
The intrinsic stability against perturbations motivated interest to topological phases
as a storage medium for a reliable quantum memory~\cite{Kitaev:1997,Dennis:2001,jiang:2008} and as a platform for
 building a topological quantum computer~\cite{Kitaev:1997,Freedman:2004}.

A big open question in the theory of topological quantum order (TQO) concerns feasibility of a non-volatile, or, {\em self-correcting}, quantum memory~\cite{Kitaev:1997,Bacon:2005}. Such a memory would permit reliable long-term storage of quantum information in a presence of sufficiently weak thermal noise without need for active stabilization and error correction during the storage period. The main challenge in designing Hamiltonians with self-correcting properties
is to combine TQO with an energy landscape that could prevent errors caused by thermal fluctuations
from accumulating. This could guarantee that the error density remains sufficiently small during the entire
storage period and the encoded information can be safely extracted from the memory by performing
an active error correction at the read-out phase.

In spite of being intrinsically stable against perturbations at the zero-temperature,  TQO models display extreme fragility against thermal fluctuations~\cite{Nussinov:2008} suggesting impossibility of quantum self-correction.
A thermal stability analysis involving finite-temperature extensions of the topological entanglement entropy
has been undertaken for the 2D and 3D toric code models by Castelnovo and Chamon~\cite{Castelnovo:2007},
and by Iblisdir et al~\cite{Iblisdir:2010}. These models were shown to undergo
a  transition from  a topologically ordered phase at $T=0$ to a different phase with either partial or no
topological order at  any positive temperature~\cite{Castelnovo:2007,Castelnovo:2008,Iblisdir:2010}.

The first rigorous analysis of self-correcting properties for the toric code models was performed by  Alicki et al~\cite{AFH:2009,Alicki:4D}. It showed that the 4D toric code Hamiltonian has self-correcting properties
for sufficiently small temperature, while 2D and 3D toric codes are not self-correcting at any
finite temperature.
The ideas of~\cite{AFH:2009,Alicki:4D} were developed further by Kay~\cite{kay:nonreliable}, Chesi et al~\cite{Chesi:2010,Chesi:tc}
and Pastawski et al~\cite{Pastawski:2009}.

The main feature of the 4D toric code model responsible for self-correction is a {\em macroscopic energy
barrier} that must be crossed by any sequence of local errors whose combined action on encoded states
cannot be corrected at the final read-out phase~\cite{Alicki:4D}. The height of this barrier grows
linearly with the lattice size due to a finite string-tension characterizing boundaries of membranes
associated with errors.  It  is analogous to the energy barrier
separating ground states with positive and negative magnetization in the ferromagnetic 2D Ising model.
Unfortunately, this behavior cannot be reproduced in any known 2D or 3D model due to a presence
of point-like excitations carrying a non-trivial topological charge, or, {\em point-like defects}~\footnote{One should not confuse point-like defects in TQO models
whose creation requires constant energy
and  vortex-type topological excitations in classical spin systems
whose creation requires divergent energy for a single isolated vertex.}.
These defects  are analogous to domain-walls in the 1D Ising model --- a single isolated defect
has only a  constant energy cost, but its creation requires a highly non-local operation affecting
a macroscopic number of qubits (spins).
Whether or not the presence of point-like defects rules out self-correcting properties
may depend on how fast these defects can diffuse across the system.
For example, Hamma et al~\cite{Hamma:toric-boson} used a coupling with a bosonic field
to create an effective long-range attractive interaction between defects whereby suppressing the diffusion.
A different possibility is realized in the 3D  Chamon's model~\cite{Chamon:2005,BLT:2011}.
This model offers a topological protection against diffusion of some types of defects (but not all of them).
These defects, called monopoles in~\cite{BLT:2011}, can be created at corners of rectangular
shaped membranes. A hopping of a single isolated monopole between adjacent lattice sites
is a highly non-local operation  affecting a macroscopic number of qubits, see~\cite{BLT:2011} for details.

In the present paper we propose yet another possibility to suppress the diffusion of defects that can be
realized in certain 3D spin Hamiltonians with strictly local bounded-strength interactions.
These Hamiltonians, associated with stabilizer error correcting codes~\cite{Gottesman:1998},
have a peculiar property that isolated defects
cannot move further than a certain constant distance away without creating other defects.
For brevity, we shall refer to this property (stated more formally below) as a {\em no-strings rule}
because it is closely connected to the absence of logical string-like operators capable of
moving the defects. Let us point out that the first example of a 3D spin Hamiltonian
with TQO obeying the no-strings rule has been found only quite recently by one of us~\cite{Haah:2011}.
We prove that any sequence of local errors creating an isolated defect from the vacuum
with no other defects within distance $R$ must cross an energy barrier at least  $c\log{R}$
for some constant $c$.
Furthermore, the length of such error sequence must grow at least as $R^\gamma$ for some constant $\gamma > 1$.
The same bound applies to creation of any isolated cluster of defects with a non-trivial total
topological charge.

It shows that although defects do not interact directly, their diffusive motion is suppressed by logarithmic energy barriers preventing the defects from spreading
 (the concept of a diffusive motion must be used with care
in our case because individual defects can only move a constant distance away).

We also prove a logarithmic lower bound on the energy barrier
for implementing any logical operator. More precisely, we prove
 that any sequence of local errors mapping a ground state
to an orthogonal ground state must cross an energy barrier at least $c\log{L}$, where
$L$ is the lattice size and $c$ is some constant. For the Hamiltonian discovered
in~\cite{Haah:2011} this bound is tight up to a constant factor.
 Although the scaling of the energy barrier is not as favorable
as the one in the  4D toric code, we point out that
the energy barrier does not grow with the lattice size at all
for all previously studied TQO Hamiltonians in the 2D and 3D geometry.
A naive estimate of the storage time $\tau$ for a memory with an energy barrier $B$
operating at a temperature $T$ can be
made using the Arrhenius law, namely, $\tau \sim e^{B/T}$. Since in our case $B=c\log{L}$ for
some constant $c$, we arrive at $\tau\sim L^{c/T}$.
Although this `derivation' gives only polynomial scaling of $\tau$,
the degree of the polynomial can be made arbitrarily large
by choosing sufficiently small temperature.

It is worth pointing out that 2D Hamiltonians with TQO
always have string-like logical operators and thus 3D is the smallest spatial dimension
for constructing Hamiltonians obeying the no-strings rule.
Indeed, it was shown by Terhal and one of us~\cite{BT:2009} that for
any 2D local stabilizer-type Hamiltonians the energy barrier for implementing at least
one logical operator is constant.
It should also be noted that 3D translation-invariant stabilizer Hamiltonians with TQO
can obey the no-strings rule only if the ground state degeneracy is not invariant under changing lattice dimensions
~\cite{Yoshida:2011,Yoshida:3Dnew}.

An obvious difficulty in proving a lower bound on the energy barrier is that there are too many ways
to choose an error path (i.e., a sequence of local errors) implementing a fixed Pauli operator.
In fact, we do not impose any restrictions on the length of the error path except that it is finite.
Our bound should be applicable to any such error path.
An additional difficulty is that a Pauli operator that an error path needs to implement is only defined
modulo stabilizers.  We resolve these difficulties using a novel technique
which can be regarded as a renormalization group in the space of error paths,
see Fig.~\ref{fig:RG} for a brief summary of the technique.
Let us now state our main results more formally.

\begin{figure}[h]
\centerline{\includegraphics[height=4cm]{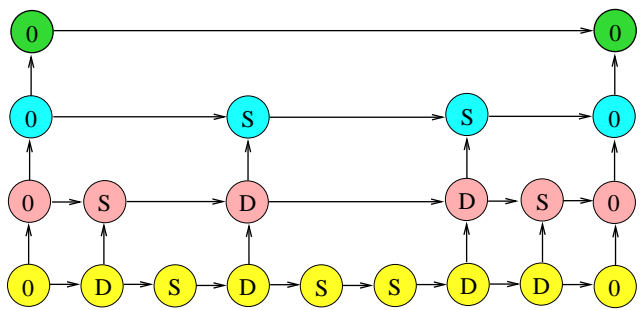}}
\caption{Renormalization group technique used to prove a
logarithmic  lower bound on
the energy barrier. Horizontal axis represents time. Vertical axis
represents RG level $p=0,1,\ldots,p_{max}$.
A sequence of level-$0$ errors (single-qubit Pauli operators) implementing a logical operator $\overline{P}$
defines a level-$0$ syndrome history (yellow circles)  that consists of sparse (S) and dense (D) syndromes.
The history begins and ends with the vacuum ($0$).
For any level $p\ge 1$ we define a level-$p$ syndrome history
by retaining only dense syndromes at the lower level.
A syndrome is called dense at level $p$ if it
cannot be partitioned into clusters of size $\le (10\alpha)^p$
separated by distance $\ge (10\alpha)^{p+1}$,
where $\alpha$ is a constant coefficient from the no-strings rule.
Each  level-$p$ error (horizontal arrows) connecting syndromes $S',S''$
is equivalent to the product of all level-$(p-1)$ errors between $S',S''$
modulo a stabilizer. We prove that these stabilizers can be
chosen such that
level-$p$ errors act on $2^{O(p)}$ qubits.
Since at the highest  level $p=p_{max}$
a single level-$p$ error is a logical operator,
one must have $p_{max}=\Omega(\log{L})$.
We prove that level-$p$ dense syndromes contain $\Omega(p)$ defects
which implies that at least one syndrome at level $p=p_{max}-2$
consists of $\Omega(\log{L})$ defects.}
\label{fig:RG}
\end{figure}

{\bf Stabilizer code Hamiltonians.}
Quantum spin Hamiltonians based on stabilizer codes provide convenient mathematical models of TQO that became an active research topic over last years.
We consider a regular $D$-dimensional cubic lattice $\Lambda$
with periodic boundary conditions and linear size $L$, that is, $\Lambda=\ZZ_L^D$.
Each site $u\in \Lambda$
is populated by a finite number of qubits.
A large class of TQO models can be described by the so-called
{\em stabilizer Hamiltonians} defined as
\be
\label{H}
H=-\sum_{a=1}^{M} G_a,
\ee
where each term $G_a$  is a multi-qubit Pauli operator (a tensor product of $I,X,Y,Z$
with an overall $\pm 1$ sign)
and different terms commute with each other. Hence
\be
G_a G_b =G_b G_a  \quad \mbox{and} \quad G_a^2=I.
\ee
The abelian group $\calG$ generated by $G_1,\ldots,G_M$
is called a {\em stabilizer group} of the code. Elements of $\calG$
are called {\em stabilizers}.
Physically realistic Hamiltonians  may only involve interactions between
small subsets of qubits located close to each other. We shall
assume that each generator $G_a$
acts non-trivially (by $X,Y$ or $Z$) only on a set of qubits located at vertices of
an elementary cube. It is allowed to have more than one generator per cube.
Any short-range stabilizer Hamiltonian can be written in this form by performing
a coarse-graining of the lattice. The Hamiltonian may or may not be translation-invariant.

We shall assume that $H$ is frustration-free~\footnote{This is always the case
for independent generators $G_a$. Since our goal is to obtain a lower bound on the
energy barrier, we can assume that the generators are independent, although it does not
play any role in our analysis.}, that is,
ground states $\psi_0$ of $H$ obey $G_a\, \psi_0= \psi_0$
for all $a$, or, equivalently, the stabilizer group $\calG$
does not contain $-I$.
Consider any
multi-qubit Pauli operator $E$.
A state $\psi = E\,  \psi_0 $
is an excited eigenstate of $H$. Obviously, $G_a \, \psi=\pm \psi$
where the sign depends on whether $G_a$ commutes (plus)
or anticommutes (minus) with $E$.
 Any flipped generator ($G_a\, \psi=-\psi$) will be referred to as a {\em defect}.
Let us emphasize  that a configuration of  defects in $\psi$ is the same for all
ground states $\psi_0$.
An eigenstate with $m$ defects has energy $2m$ above the ground state.
For brevity, we shall use the term {\em vacuum} for a  ground state of $H$
whenever its choice  is not important.
A Pauli operator $E$ whose action on the vacuum creates no defects is either a stabilizer ($E\in \calG$),
or a {\em logical operator} ($E\notin \calG$, but $E$ commutes with $\calG$).
In the former case any ground state of $H$ is invariant under $E$.
In the latter case $E$ maps some ground state of $H$ to an orthogonal ground state.

{\bf Topological order.}
A Hamiltonian is  said to have topological order if it
has a degenerate ground state and different ground states are locally indistinguishable.
We shall need a slightly stronger version of this condition
that involves properties of both ground and excited states.
These properties depend on a length scale $L_{tqo}$ that must
be bounded as $L_{tqo}\ge  L^\beta$ for some constant $\beta>0$.
(For stabilizer codes, $L_{tqo}$ corresponds to the code distance.)
Most of stabilizer code Hamiltonians with topological order satisfy our conditions
with $L_{tqo}\sim L$.
Our first TQO condition concerns ground states:
\begin{center}
\parbox{8cm}{\em If a Pauli operator $E$ creates no defects when applied to the vacuum
and its support can be enclosed by a cube of linear size $L_{tqo}$ then $E$
is a stabilizer,  $E\in \calG$.}
\end{center}
Our second TQO condition concerns excited states.
A cluster of defects $S$ will be called {\em neutral} if it can be created
from the vacuum by a Pauli operator $E$ whose support can be enclosed
by a cube of linear size $L_{tqo}$ without creating any other defects.
Otherwise we say that $S$ is a {\em charged} cluster.
Given a region $A\subseteq \Lambda$ we shall use a notation
$\calB_r(A)$ for the $r$-neighborhood of $A$, that is, a set of all points that have distance at most $r$ from $A$.
Here and below we use $l_{\infty}$-distance on $\ZZ_L^D$.
We shall need the following condition saying that neutral clusters of defects can be
created from the vacuum locally~\footnote{If a lattice has a boundary, charged defects 
might be created locally on the boundary, as it is the case for the planar version of the toric
code. This is the reason why we restrict ourselves to periodic boundary conditions.}.

\begin{center}
\parbox{8cm}{\em Let $S$ be a neutral cluster of defects and $C_{min}(S)$
be the smallest cube that encloses $S$. Then $S$ can be created
from the vacuum by a Pauli operator supported on $\calB_1(C_{min}(S))$.}
\end{center}

{\bf No-strings rule.}
We can now state the property of having no logical string-like operators.
Informally, it says that applying an operator with a `string-like' support to the vacuum cannot create
charged  defects at the end-points of the string, assuming that the string
is sufficiently long. Let us now define this property rigorously.
Let $E$ be any Pauli operator whose support can be enclosed by a cube
of linear size $L_{tqo}$ and $S$ be a cluster of defects obtained by applying
$E$ to the vacuum. Let $A_1,A_2$ be any pair of disjoint cubes
of the same linear size $\rho$.
We shall say that $E$ is a {\em logical string segment}
with {\em anchor regions} $A_1,A_2$  iff $S$ is contained in the union
$A_1\cup A_2$. Equivalently, $E$ must commute with all generators
$G_a$ located outside $A_1\cup A_2$.
We will say that a logical string segment $E$ has {\em aspect ratio}
$\alpha$ iff the distance between $A_1$ and $A_2$ is at least $\alpha \rho$.
We shall only consider string segments with sufficiently large aspect ratio,
say, $\alpha\ge 1$.
A  logical string segment $E$ is called {\em trivial} iff
the cluster of defects contained inside any anchor region is neutral.
\begin{center}
\parbox{8cm}{\em  A code has no logical string-like operators iff
there exists a constant $\alpha$ such that all
logical string segments with aspect ratio greater than $\alpha$
are trivial.}
\end{center}
In the rest of the paper we shall abbreviate the condition of having no logical string-like operators
as a {\em no-strings rule}.
We note that a 3D stabilizer code (Code~1) discovered in~\cite{Haah:2011}
obeys our topological order conditions with $L_{tqo}\sim L$ and
obeys the no-strings rule with $\alpha=15$.

{\bf Energy barrier.}
Let us consider a process of building  a logical operator $\overline{P}$ from local errors.
It can be described by
an {\em error path} --- a finite sequence of local Pauli errors  $E_1,\ldots,E_T$ such that
$\overline{P}=E_T \cdots E_2 E_1$. For simplicity we shall assume that each local error  $E_t$ is a single-qubit Pauli operator $X$, $Y$, or $Z$. Applying this sequence of errors to a ground state $\psi_0$
generates a sequence of states $\{ \psi(t)\}_{t=0,\ldots,T}$, where
$\psi(0)=\psi_0$ and $\psi(T)=\overline{P}\, \psi_0$ are ground states of $H$, while the
intermediate states $\psi(t)=E_t\cdots E_1 \, \psi_0$ are typically excited.
We say that a logical operator $\overline{P}$ has  energy barrier $\omega$ iff for any error path implementing $\overline{P}$ at least one of the intermediate
states $\psi(t)$ has more than $\omega$ defects.
Note that we do not impose any restrictions on the length of the path $T$ (as long as it is finite).
In particular, one and the same error may be repeated in the error path several times
at different time steps.
We shall also  consider an energy barrier for creating a cluster of  defects
$S$ from the vacuum. We will say that $S$ has energy barrier $\omega$ iff
for any Pauli operator $E$ that creates $S$ from the vacuum and for any
error path implementing $E$ at least one of the intermediate states has more than $\omega$
defects.

Our main results  are the following theorems.
Both theorems apply to any stabilizer Hamiltonian Eq.~\eqref{H}
on a $D$-dimensional lattice that obeys the topological order condition
and the no-strings rule.
\begin{theorem}
\label{thm:main}
The energy barrier for any
logical operator is at least $c \log{L}$,
where $L$ is the lattice size, and $c$ is a constant coefficient.
\end{theorem}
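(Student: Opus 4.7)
My approach is the renormalization-group scheme sketched in Figure~\ref{fig:RG}. Fix any error path $E_1,\ldots,E_T$ implementing a logical operator $\overline{P}$ and record its syndrome history $S(0)=\emptyset, S(1),\ldots,S(T)=\emptyset$. Introduce length scales $\ell_p=(10\alpha)^p$ and call a syndrome \emph{$p$-dense} if it cannot be partitioned into clusters of diameter $\le \ell_p$ separated by pairwise distances $\ge \ell_{p+1}$, and \emph{$p$-sparse} otherwise. I will inductively build a coarse-grained history at each level $p$ by retaining only $p$-dense syndromes and replacing each intervening subpath by a single \emph{effective level-$p$ error}. The proof reduces to two estimates: (i) the number of levels $p_{max}$ satisfies $p_{max}=\Omega(\log L)$, and (ii) any $p$-dense syndrome contains $\Omega(p)$ defects.

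\textbf{RG step.} The heart of the argument is the inductive construction. Between two consecutive $p$-dense syndromes $S',S''$, every intermediate syndrome is $p$-sparse, so its defects decompose into clusters of diameter $\le \ell_p$ separated by at least $\ell_{p+1}=10\alpha\,\ell_p$. For each such cluster $C$, I apply the no-strings rule to the Pauli segment responsible for $C$'s defects: the fact that $C$ sits at distance $\ge 10\alpha\,\ell_p$ from all other defects fulfils the aspect-ratio condition and forces $C$ to be neutral. The second TQO condition then supplies a local Pauli $F_C$ supported on $\calB_1(C_{min}(C))$ that creates $C$ from the vacuum. Using the $F_C$'s to ``short-circuit'' the sparse excursions, the product of errors between $S'$ and $S''$ can be rewritten as an effective level-$p$ error $E'$ times a stabilizer $G\in\calG$, where $E'$ is built from a bounded number of level-$(p-1)$ effective errors. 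An inductive size bound then yields support on $2^{O(p)}$ qubits in a region of linear extent $O(\ell_p)$.

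\textbf{Completing the proof.} At the top level $p_{max}$ the coarse-grained history collapses to a single effective error connecting vacuum to vacuum and equal to $\overline{P}$ modulo stabilizers. If its support fit inside a cube of side $L_{tqo}$, the first TQO condition would force $\overline{P}\in\calG$, contradicting nontriviality. Hence the linear extent of this support is at least $L_{tqo}\ge L^\beta$, which forces $(10\alpha)^{p_{max}}\gtrsim L^\beta$ and therefore $p_{max}=\Omega(\log L)$. Complementarily, I will prove by a greedy-merging argument over the scales $\ell_0,\ldots,\ell_p$ that any $p$-dense syndrome has $\Omega(p)$ defects: if the sparse-partition test fails at each scale, each failure forces at least one defect to merge with a neighboring cluster, and $\Omega(p)$ merges require $\Omega(p)$ defects to begin with. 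Applied at $p=p_{max}-2$, where the coarse-grained history must still contain a dense syndrome for the RG tower to climb two more levels, this produces an intermediate state $\psi(t)$ of the original path carrying $\Omega(\log L)$ defects and hence energy $\ge c\log L$ above the ground state.

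\textbf{Main obstacle.} The hardest step will be the RG construction: showing that the effective error $E'$ remains supported on only $2^{O(p)}$ qubits in a region of linear extent $O(\ell_p)$. This requires delicate bookkeeping of how the local creation operators $F_C$, absorbed as stabilizers, propagate support upward, carefully matching the separation scale $10\alpha\,\ell_p$ to the aspect-ratio hypothesis of the no-strings rule, and controlling the fact that effective errors are only defined modulo $\calG$. A compatible choice of representatives must be made consistently along the entire history so that the tower of level-$p$ errors is well-defined all the way up to $p_{max}$, and one must verify that within each RG interval the sparse clusters encountered in the relevant spatial region are few enough that the support grows geometrically rather than combinatorially.
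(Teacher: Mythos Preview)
Your overall strategy matches the paper's: the RG hierarchy of syndrome histories, the counting lemma (the paper's Lemma~\ref{lemma:counting}), the localization of level-$p$ errors (the paper's Lemma~\ref{lemma:RG2}), and the final contradiction at the top level all appear essentially as you describe. The greedy-merging proof you sketch for the counting lemma is exactly the paper's argument.

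However, your RG step contains a genuine error in how the no-strings rule is invoked. You claim that every cluster $C$ in a $p$-sparse intermediate syndrome is neutral, arguing that its separation from the rest of the syndrome ``fulfils the aspect-ratio condition and forces $C$ to be neutral.'' This is not what the no-strings rule asserts, and the conclusion is false: sparse syndromes can and typically do contain \emph{charged} clusters (for instance, a pair of isolated charged defects placed far apart is sparse at high levels, yet each singleton is charged). The no-strings rule constrains \emph{logical string segments}---Pauli operators whose support fits inside two small anchor cubes---not arbitrary clusters inside a syndrome created by a Pauli with unrestricted support. There is in general no ``Pauli segment responsible for $C$'s defects'' with string-like support to which the rule could apply, so your short-circuiting of every cluster by a local creator $F_C$ is not available.

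The paper's localization argument proceeds differently. Each sparse intermediate syndrome is split as $S(t)=S^c(t)\cup S^n(t)$ into charged and neutral clusters. Only the neutral clusters are short-circuited by local creators. The charged clusters are \emph{tracked along the history}: one shows that a charged cluster cannot drift further than $\alpha\,\xi(p)$ from its initial position, because the product of the nearby level-$p$ errors between its initial and displaced positions would otherwise be a nontrivial string segment violating the no-strings rule. Since the charged clusters must have been created adjacent to $S'$ by the leading level-$p$ error, they remain close to $S'$ throughout, and the localized error $\tilde{E}$ ends up supported on $\calB_{\xi(p)}(S'\cup S'')$. This is also why the hypothesis of the localization lemma and the final contradiction involve the maximum defect count $m$ (via $4m(2+\xi(p))<L_{tqo}$) rather than the pure $O(\ell_p)$ bound you state; the extra factor of $m$ is harmless because one may assume $m\ll\log L$, otherwise there is nothing to prove.
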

\begin{theorem}
\label{thm:main'}
Let $S$ be a neutral cluster of defects
containing a charged cluster $S' \subseteq S$ of diameter $r$
such that there are no other defects within distance $R$ from $S'$.
If $r+R < L_{tqo}$,
then the energy barrier for creating $S$ from the vacuum is at least $c \log{R}$,
where $c$ is a constant coefficient.
\end{theorem}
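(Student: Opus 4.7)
The plan is to apply the renormalization-group scheme of Fig.~\ref{fig:RG} used to prove Theorem~\ref{thm:main}, but with the top RG level $p_{max}$ set by the local isolation radius $R$ rather than the lattice size $L$. Fix an error path $E_1,\ldots,E_T$ creating $S$ from the vacuum. Following Fig.~\ref{fig:RG}, I inductively construct a level-$p$ syndrome history by keeping only the syndromes that are dense at level $p-1$ and coarse-graining each block of intervening errors into a single level-$p$ error modulo a stabilizer. Exactly as for Theorem~\ref{thm:main}, the sparse blocks collapsed at this step consist of clusters of diameter at most $(10\alpha)^{p-1}$ pairwise separated by at least $(10\alpha)^p$; the no-strings rule forces every such cluster to be neutral, the second TQO condition produces a local Pauli operator creating the cluster from the vacuum, and the product of these operators realises the claimed stabilizer equivalence. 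Each level-$p$ error is then supported on a region of diameter $O((10\alpha)^p)$.

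Choose $p_{max}$ as the largest integer such that $(10\alpha)^{p_{max}}$ is at most a small constant times $R$, so that $p_{max}=\Omega(\log R)$. With this choice, every level-$p$ error involving $S'$ for $p\le p_{max}$ is supported inside the isolation ball $\calB_R(S')$, which in turn sits inside a cube of size $r+R<L_{tqo}$, so the topological order conditions apply locally around $S'$. I claim that the syndrome history must contain a dense syndrome at some level $p=p_{max}-O(1)$ in the vicinity of $S'$. Otherwise, the RG construction would reduce the portion of the history responsible for producing $S'$ to a single level-$p_{max}$ error $F$ of support diameter less than $R$ applied to the vacuum. Since no other defects lie within distance $R$ of $S'$, the syndrome produced by $F\,\psi_0$ inside $\calB_R(S')$ would then be exactly $S'$, exhibiting $S'$ as a neutral cluster and contradicting the hypothesis that $S'$ is charged.

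Consequently a dense syndrome occurs at some level $p=\Omega(\log R)$. Reusing the defect-counting lemma from Theorem~\ref{thm:main} --- that every level-$p$ dense syndrome contains $\Omega(p)$ defects --- I conclude that the corresponding intermediate state $\psi(t)$ in the original error path carries at least $c\log R$ defects, which is the asserted lower bound on the energy barrier.

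The main difficulty is the localisation step of the contradiction: one has to track the charged cluster $S'$ through the RG iteration, separate its contribution from the remaining (neutral) part of $S$ by invoking the second TQO condition on the complement, and verify that every stabilizer equivalence used to define a level-$p$ error can be arranged to live in a cube of diameter $O((10\alpha)^p)$ inside $\calB_R(S')$. The hypothesis $r+R<L_{tqo}$ is what makes this possible, since it guarantees that every cube appearing in the local RG analysis around $S'$ fits within the topological order length scale, so that the no-strings rule and TQO conditions can both be applied to all the subclusters encountered.
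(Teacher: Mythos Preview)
Your overall strategy matches the paper's: run the same RG hierarchy as in Theorem~\ref{thm:main}, let $p_{max}$ be the top level, and derive a contradiction with the chargedness of $S'$ if $p_{max}$ is too small compared to $\log R$. The defect-counting via Lemma~\ref{lemma:counting} at the end is also exactly what the paper does.

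Where your write-up diverges from the paper is the localisation step, and the divergence is not cosmetic. You assert that ``each level-$p$ error is supported on a region of diameter $O((10\alpha)^p)$'' and that the RG ``reduces the portion of the history responsible for producing $S'$ to a single level-$p_{max}$ error $F$ of support diameter less than $R$''. Neither statement is what Lemma~\ref{lemma:RG2} gives you. The lemma localises a level-$p$ error to $\calB_{\xi(p)}(S'\cup S'')$, the $\xi(p)$-neighbourhood of the \emph{two endpoint syndromes}; this set has no a priori bound on its diameter, since the syndromes themselves may be spread across the lattice. There is also no notion of a ``portion responsible for $S'$'' in the RG scheme: at each level you coarse-grain the entire history, not a piece of it, so the idea of ``tracking $S'$ through the RG'' and separating its contribution is not well-defined as stated.

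The paper avoids this entirely. It defines $p_{max}$ as the highest RG level, so that a \emph{single} level-$p_{max}$ error $E$ takes the vacuum to $S$. Lemma~\ref{lemma:RG2} (with the hypothesis $4m(\xi(p_{max})+2)<L_{tqo}$, else $m$ is already large) then gives $\tilde E$ supported on $\calB_{\xi(p_{max})}(S)$. Now the geometry of $S$ does the work: if $\xi(p_{max})<R/4$, the set $\calB_{\xi(p_{max})}(S)$ splits into two disjoint pieces, one contained in a cube of side $r+R/2<L_{tqo}$ around $S'$ and the other far away. The restriction of $\tilde E$ to the first piece creates exactly $S'$ from the vacuum, contradicting that $S'$ is charged. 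Hence $\xi(p_{max})\ge R/4$, i.e.\ $p_{max}=\Omega(\log R)$. This is a one-shot geometric splitting at the top of the RG, not an inductive tracking of $S'$; once you replace your localisation paragraph with this argument, the rest of your outline goes through.
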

\noindent
The constant $c$ depends only on the spatial dimension $D$, the constant
$\alpha$ in the no-strings rule, and the constant $\beta$
in the bound $L_{tqo}\ge L^\beta$.
Since the proof of both theorems uses the same technique, 
below we shall focus on proving Theorem~\ref{thm:main}. Proof of Theorem~\ref{thm:main'}
will require only minor modifications that are explained later on.

\begin{proof}[Proof of Theorem~\ref{thm:main}]
A configuration of defects created by applying a Pauli operator $E$ to the vacuum
will be called a {\em syndrome caused by $E$}.
The process of building up a logical operator $\overline{P}$ by a sequence of local errors
$E_1,\ldots,E_T$
can be described by
a {\em syndrome history}
$\{S(t)\}_{t=0,\ldots,T}$. Here $S(t)$ is the syndrome caused by
the product
$E_t\cdots E_1$, that is,
the partial implementation of $\overline{P}$ up to a step $t$.
The syndrome history starts and ends with the vacuum, i.e.,
$S(0)=S(T)=\emptyset$.
Without loss of generality all intermediate syndromes $S(t)$ are non-empty.
For any integer $p\ge 0$ define  a level-$p$ unit of length as~\footnote{The choice of the constant $10$ 
in the definition of sparsity is somewhat arbitrary. We have not tried
to optimize constants in our proof.}
\[
\xi(p)=(10\alpha)^p, \quad p=0,1,\ldots.
\]
Let $S(t)$ be any non-empty syndrome.
Recall that each defect in $S(t)$ can be associated with some elementary cube of the lattice. 
\begin{dfn}
\label{dfn:psparse}
A syndrome $S(t)$ is called \emph{sparse} at level $p$ iff
the set of elementary cubes occupied by
$S(t)$ can be partitioned into a  disjoint union of clusters
such that each cluster has diameter at most $\xi(p)$
and any pair of distinct clusters combined together has diameter larger
than $\xi(p+1)$. Otherwise $S(t)$ is called dense at level $p$.
\end{dfn}
For example, suppose all defects in $S(t)$ occupy the same elementary cube. 
Since an elementary cube has diameter $1$, such a syndrome $S(t)$ is sparse at any level $p\ge 0$.
If $S(t)$ occupies a pair of adjacent cubes, $S(t)$ is sparse at any level $p\ge 1$,
however it is dense at level $p=0$.
We note that the partition of $S(t)$ into clusters required for level-$p$ sparsity
is unique whenever it exists.
\begin{lemma}
\label{lemma:counting}
Suppose a non-empty syndrome $S(t)$ is dense at all levels $q=0,\ldots,p$.
Then $S(t)$ contains at least $p+2$ defects.
\end{lemma}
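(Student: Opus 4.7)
The plan is to run an agglomerative (hierarchical) clustering on the set of elementary cubes occupied by $S(t)$ and to show that each level at which $S(t)$ is dense must consume a distinct ``merge diameter'' lying in its own scale interval. Let $m$ be the number of distinct cubes occupied by $S(t)$, so $m\le |S(t)|$. I would form nested partitions $\mathcal{P}_0,\mathcal{P}_1,\ldots,\mathcal{P}_{m-1}$ by starting with $\mathcal{P}_0$ equal to the partition into singletons and, at each step $k\ge 1$, merging the pair of clusters of $\mathcal{P}_{k-1}$ whose union has the smallest diameter, recording this diameter as $d_k$. A short induction then yields three useful facts: (i) $d_1\le d_2\le\cdots\le d_{m-1}$; (ii) every cluster of $\mathcal{P}_k$ has diameter at most $d_k$; and (iii) for $k<m-1$, the minimum of $\diam(C\cup C')$ over distinct clusters $C,C'$ of $\mathcal{P}_k$ equals $d_{k+1}$.

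The key implication I would then establish is: \emph{if no $d_k$ falls in the half-open interval $(\xi(q),\xi(q+1)]$, then $S(t)$ is sparse at level $q$}. To see this, set $k^{*}=\max\{k\ge 0:d_k\le\xi(q)\}$ with the conventions $d_0:=0$ and $d_m:=+\infty$. By maximality, $d_{k^{*}+1}>\xi(q)$, and by the hypothesis this in fact forces $d_{k^{*}+1}>\xi(q+1)$. Then $\mathcal{P}_{k^{*}}$ is an explicit witness of level-$q$ sparsity: each of its clusters has diameter at most $d_{k^{*}}\le\xi(q)$ by (ii), and any two of its clusters combine to a set of diameter at least $d_{k^{*}+1}>\xi(q+1)$ by (iii).

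The intervals $(\xi(q),\xi(q+1)]$ are pairwise disjoint as $q$ ranges over $\NN$, so if $S(t)$ is dense at every level $q\in\{0,1,\ldots,p\}$ the contrapositive of the key implication forces at least $p+1$ distinct merge diameters $d_k$, one per interval. Since only $m-1$ such diameters exist, this yields $m-1\ge p+1$ and hence $|S(t)|\ge m\ge p+2$. The main obstacle I expect is really just verifying the three clustering properties --- in particular, the monotonicity $d_k\le d_{k+1}$, which relies on the observation that after merging a pair at step $k$, every remaining pair in $\mathcal{P}_{k-1}$ already had combined diameter at least $d_k$, and the union with the freshly merged cluster can only increase that --- together with handling the boundary cases $k^{*}=0$ and $k^{*}=m-1$ in the key implication; once these are in place, the counting is immediate.
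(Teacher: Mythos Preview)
Your proof is correct and is essentially the paper's own argument, recast in the language of complete-linkage hierarchical clustering: the paper merges one pair of clusters per density level (using denseness at level $q$ to find two clusters of its current partition whose union has diameter at most $\xi(q+1)$, then merging them), while you build the entire dendrogram in advance and pigeonhole the merge diameters $d_k$ into the disjoint intervals $(\xi(q),\xi(q+1)]$. In both versions the core point is identical---each level of denseness forces one merge, hence one fewer cluster, yielding $m-1\ge p+1$.
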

\begin{proof}
Let $C^{(0)}_1,\ldots, C^{(0)}_g$ be elementary cubes
occupied by $S(t)$. Obviously, $S(t)$ contains at least $g$ defects. 
Since  $S(t)$ is non-empty and dense at level $0$, we have $g\ge 2$
and there exists a
pair of cubes $C_a^{(0)}, C_b^{(0)}$ such that the union
$C_a^{(0)}\cup C_b^{(0)}$ has diameter at most $\xi(1)$.
Combining the pair $C_a^{(0)}, C_b^{(0)}$  into a single cluster we obtain a partition
$S(t)=C_1^{(1)}\cup \ldots \cup C_{g-1}^{(1)}$
where each cluster $C_a^{(1)}$ has diameter at most $\xi(1)$.
Suppose $S(t)$ is dense at level $1$. Then $g\ge 3$
and there exists
a pair of clusters $C_a^{(1)}, C_b^{(1)}$ such that the union
$C_a^{(1)}\cup C_b^{(1)}$ has diameter at most $\xi(2)$.
Combining the pair $C_a^{(1)}, C_b^{(1)}$ into a single cluster and
proceeding in the same way we arrive at $g\ge p+2$.
\end{proof}
Let us define a level-$p$ syndrome history
as a subsequence of the original syndrome history $\{S(t)\}_{t=0,\ldots,T}$
that includes only those syndromes $S(t)$ that are dense
at all levels $q=0,\ldots,p-1$, see Fig.~\ref{fig:RG}.
The level-$0$ syndrome history includes all syndromes $S(t)$.
The syndrome history starts and ends with the vacuum (empty syndrome)
at any level $p$.
Let $S(t')$ and $S(t'')$ be  a consecutive pair of level-$p$ syndromes.
We define a level-$p$ error $E$ connecting $S(t')$ and $S(t'')$
as the product of all single-qubit errors $E_j$ that occurred between $S(t')$ and $S(t'')$.
Level-$p$ errors are represented by  by horizontal arrows on Fig.~\ref{fig:RG}.
We would like to show that $E$ can be regarded as an approximately local error
on a coarse-grained lattice characterized by the unit of length $\xi(p)$. 
The problem however is that  we do not have any bound on the number
of single-qubit errors $E_j$ in the interval between $S(t')$ and $S(t'')$.
In the worst case, $E$ could act non-trivially on every qubit in the system. 
The following  lemma shows that  level-$p$ errors can be `localized' by multiplying them
with stabilizers.
Let $m$ be the maximum number of defects in the syndrome history, such that  any $S(t)$
contains at most $m$ defects. 
\begin{lemma}
\label{lemma:RG2}
Let $S'\equiv S(t')$ and $S''\equiv S(t'')$ be a consecutive pair of syndromes in the level-$p$ syndrome history.
Let $E$ be the product of all errors $E_j$ that occurred between $S'$ and $S''$.
If $4m (2+\xi(p)) < L_{tqo}$,
then there exists an error $\tilde{E}$  supported on $\calB_{\xi(p)}(S'\cup S'')$
such that $E\tilde{E}$ is a stabilizer.
\end{lemma}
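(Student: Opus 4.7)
The plan is induction on $p$, with the base case $p=0$ being immediate since a level-$0$ error $E$ is a single-qubit Pauli, and if $E$ creates any defect at all it must act on a qubit contained in $\calB_1(S'\cup S'')$ (the remaining degenerate cases are handled by TQO). For the inductive step, I let $S'=S_0,S_1,\ldots,S_k=S''$ be the subsequence of level-$(p-1)$ syndromes falling in the time interval $[t',t'']$; by construction every intermediate $S_i$ ($0<i<k$) is dense at levels $0,\ldots,p-2$ but sparse at level $p-1$. Applying the inductive hypothesis to each consecutive pair $(S_{i-1},S_i)$ yields stabilizers $G_i$ such that $\hat E_i:=E_iG_i$ is supported on $\calB_{\xi(p-1)}(S_{i-1}\cup S_i)$, where $E_i$ is the product of single-qubit errors between $S_{i-1}$ and $S_i$. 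Commuting the $G_i$'s through the $\hat E_j$'s and collecting them into a single stabilizer (which only introduces harmless $\pm 1$ signs) reduces the task to showing that the product $\hat E:=\hat E_k\cdots \hat E_1$ can be modified by one further stabilizer into an operator supported on $\calB_{\xi(p)}(S'\cup S'')$.

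The central geometric input is level-$(p-1)$ sparsity: each interior $S_i$ is a disjoint union of clusters of diameter at most $\xi(p-1)$ whose pairwise $l_\infty$-distances exceed $(10\alpha-2)\xi(p-1)\ge 8\xi(p-1)$. Consequently the $\xi(p-1)$-neighborhoods of distinct clusters of the same interior syndrome are disjoint, and a short combinatorial check shows that each connected component of $\calB_{\xi(p-1)}(S_{i-1}\cup S_i)$ contains at most one cluster from each of $S_i$ and $S_{i-1}$ when both are sparse. I then factor each $\hat E_i$ as a product of Pauli operators supported on the individual components, and organize these components across time into \emph{worldlines}: connected components of the graph obtained by joining two components at adjacent times whenever they share one of the intermediate clusters, with $S'$ and $S''$ serving as distinguished source/sink nodes. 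This yields a factorization $\hat E=\prod_W \hat E_W$ up to signs, and each \emph{bubble} worldline---one incident to neither $S'$ nor $S''$---contributes an operator $\hat E_W$ whose syndrome on the vacuum is empty.

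The main work, and the place where the no-strings rule enters, is to show (i) every bubble $\hat E_W$ is a stabilizer, and (ii) every boundary worldline can be pruned so that its support lies inside $\calB_{\xi(p)}(S'\cup S'')$. Both statements follow from the same trimming procedure: if a worldline contains two clusters $T_a,T_b$ at times $a<b$ with $d(T_a,T_b)>(\alpha+1)\xi(p-1)$ and if the partial product $\prod_{a<j\le b}\hat E_{W,j}$ has support contained in an $L_{tqo}$-cube, then this partial product is a logical string segment of aspect ratio at least $\alpha$ whose syndrome on the vacuum is $T_a\cup T_b$. The no-strings rule then forces both $T_a$ and $T_b$ to be neutral, the second TQO condition supplies local Pauli operators $F_a,F_b$ creating them from the vacuum, and a telescoping identity lets me replace the long segment by $F_aF_b$ modulo another stabilizer, strictly shortening the worldline. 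Iterating reduces every bubble to spatial diameter $O(\xi(p))$, at which point the size hypothesis $4m(2+\xi(p))<L_{tqo}$ encloses its support in an $L_{tqo}$-cube and the first TQO condition promotes $\hat E_W$ to a stabilizer; identical reasoning prunes each boundary worldline into $\calB_{\xi(p)}(S'\cup S'')$.

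The hardest part is managing the iteration: I need to select the pairs $(a,b)$ so that the sub-supports handed to the no-strings rule always fit inside an $L_{tqo}$-cube---which is precisely where the $4m$ factor in the size hypothesis is spent, since any such sub-support sits inside the $\xi(p)$-neighborhood of at most $O(m)$ clusters---and I need a well-founded progress measure, for instance the total spatial diameter summed over all remaining worldlines, that strictly decreases with each excision so that the procedure terminates. The Pauli sign bookkeeping incurred when commuting the $G_i$'s and $F_j$'s past the other factors is routine: each such commutation contributes only a $\pm 1$ sign, and since $-I\notin\calG$ while the Pauli $\tilde E$ may have its sign flipped freely, all these signs are absorbed harmlessly.
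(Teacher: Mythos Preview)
Your overall architecture---induction on $p$, localizing the level-$(p-1)$ errors by the inductive hypothesis, then using sparsity to factor into pieces attached to individual clusters---matches the paper's proof closely. The meaningful divergence is in how you handle the factored pieces: you organize them into \emph{worldlines} and propose an iterative \emph{trimming} procedure, whereas the paper splits the intermediate sparse syndromes into their charged and neutral clusters and treats the two parts by different, non-iterative arguments. For the charged clusters the paper proves a one-shot \emph{locking} estimate: if $C_a(1)$ is charged and $t_1$ is the first time $\mathrm{dist}(C_a(t_1),C_a(1))>\alpha\xi(p)$, then the product of the nearby errors up to time $t_1$ is a logical string segment with anchors $C_a(1),C_a(t_1)$ and support automatically contained in $\calB_{(2+\alpha)\xi(p)}(C_a(1))$ (by continuity all earlier clusters sit in that ball), contradicting the no-strings rule. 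This pins every charged cluster within $(2+\alpha)\xi(p)$ of $S'$ for all times in one stroke, with no iteration. For the neutral part the paper writes down a chain $\mathrm{vac}\to S^n(1)\to\cdots\to S^n(\tau)\to\mathrm{vac}$ and invokes a clean auxiliary statement (Proposition~1 in the paper): if every transition operator and every local syndrome-creator fits in at most $m$ cubes of size $O(\xi(p))$, the whole product is a stabilizer. No trimming, no progress measure.

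The gap in your proposal is precisely the part you flag as ``hardest.'' Your suggested progress measure---total spatial diameter summed over worldlines---does not strictly decrease under a generic excision: a worldline that oscillates, say with cluster positions $0,R,0,R,\ldots$, has spatial diameter $R$, and cutting it at an interior pair $(T_a,T_b)$ produces two pieces at least one of which still has diameter $R$. The $L_{tqo}$-cube hypothesis on the partial product is also not automatic for an arbitrary pair $(a,b)$, since the intermediate clusters $T_{a+1},\ldots,T_{b-1}$ can be spread over a large region. Both issues evaporate if you replace the iteration by the paper's first-exit argument: take $a$ to be the start of the worldline and $b$ the \emph{first} time the cluster leaves $\calB_{\alpha\xi(p-1)}(T_a)$; then by continuity the entire partial support lies in a single cube of size $O(\xi(p))$, the no-strings rule applies, and you conclude immediately that the worldline never leaves that ball---no further excisions needed. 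With that correction your worldline picture becomes a valid repackaging of the paper's proof; what each approach buys is that the charged/neutral split makes the role of the no-strings rule (it acts only on charged clusters) and of the second TQO condition (it acts only on neutral clusters) completely explicit, while your worldline language gives a more uniform combinatorial picture at the cost of obscuring that dichotomy.
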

The proof of the lemma, presented in Appendix~A, uses induction in the level $p$.
The proof relies crucially on the no-strings rule.
The latter  asserts that an isolated charged defect belonging to some sparse
syndrome cannot be moved further than distance $\alpha$ away
by a sequence of local errors. Since the no-strings rule is scale invariant,
it can also be applied to a coarse-grained lattice to show that isolated 
charged clusters belonging to some sparse level-$p$ syndrome
cannot be moved further than distance $\alpha \xi(p)$ away, see Appendix~A for details. 

Let $p_{max}$ be the highest RG level, that is, the smallest integer $p$
such that a single level-$p$ error $E$ maps
the vacuum to itself, see Fig.~\ref{fig:RG}.
We claim that $p_{max}=\Omega(\log{L})$.
Indeed, suppose that  $4m(\xi(p_{max})+2) <L_{tqo}$.
Then we can apply Lemma~\ref{lemma:RG2} to the level-$p_{max}$ syndrome history
with $S'=S''=\emptyset$ (vacuum). Lemma~\ref{lemma:RG2} would imply $\tilde{E}=I$, that is,
$E$ must be a stabilizer. On the other hand,
$E$  is equivalent to a logical operator modulo stabilizers. Hence we obtain a contradiction
unless $4m(\xi(p_{max})+2) \ge L_{tqo}$.
We can assume that the maximum number of defects is $m\ll \log{L}$
(if not, there is nothing to prove).
Since $L_{tqo}$ grows as a power of $L$, we conclude that
$p_{max}=\Omega(\log{L})$. The syndrome history
must contain at least one syndrome $S(t)$ which is dense at all
levels $q=0,\ldots,p_{max}-2$ since
otherwise $p_{max}$ could not be the highest RG level.
Lemma~\ref{lemma:counting} then implies that such syndrome $S(t)$
contains $\Omega(\log{L})$ defects proving Theorem~\ref{thm:main}.
\end{proof}

We shall prove Theorem~\ref{thm:main'} using  exactly the same approach as above.
Let $S$ be a neutral cluster of defects and $E$ be a Pauli operator creating $S$ from the vacuum,
with $S' \subset S$ of diameter $r$ being charged.
Consider a hierarchy of syndrome histories similar to the one shown on Fig.~\ref{fig:RG},
where we now maintain the initial syndrome $\emptyset$ and the final syndrome $S$ for all levels.
Let $p_{max}$ be the highest RG level. Then a single  level-$p_{max}$ error $E$
creates $S$ from the vacuum.
Suppose $4m(\xi(p_{max})+2) < L_{tqo}$, where $m$ is the maximum number of defects in the syndrome history.
Lemma~\ref{lemma:RG2} implies that $E$ is the equivalent modulo stabilizers to $\tilde{E}$ supported on $\calB_{\xi(p_{max})}(S)$.
If $\xi(p_{max}) < R/4$, then $\tilde{E}$ must act on two separated regions, one near $S'$ and another far from $S'$.
This means that $S'$ alone can be created by a Pauli operator whose support is enclosed by a cube of linear size $r+R/2$.
Since $r+R < L_{tqo}$, it is contradictory to the assumption that $S'$ is charged.
Therefore, $\xi(p_{max}) \ge R/4$, or $p_{max} = \Omega(\log R)$.
In case where $4m(\xi(p_{max})+2) \ge L_{tqo} > R$, we also have $p_{max} = \Omega(\log R)$ provided $m \ll \log R$.
(There is nothing to prove if $m \sim \log R$.)
Since there must be a syndrome that is dense for all level-$p$ where $p=0,1,\ldots,p_{max}-2$,
Theorem~\ref{thm:main'} follows from Lemma~\ref{lemma:counting}.

{\bf Optimality.}
\newcommand{\drawgenerator}[8]{%
\xymatrixrowsep{.7cm}%
\xymatrixcolsep{.7cm}%
\xymatrix@!0{%
& #8 \ar@{-}[ld]\ar@{.}[dd] \ar@{-}[rr] & & #7 \ar@{-}[ld]  \\%
#1 \ar@{-}[rr] \ar@{-}[dd] &  & #2 \ar@{-}[dd] &            \\%
& #6 \ar@{.}[ld] &  & #5 \ar@{-}[uu] \ar@{.}[ll]       \\%
#3 \ar@{-}[rr] &  & #4 \ar@{-}[ru]                       %
}%
}
\begin{figure}
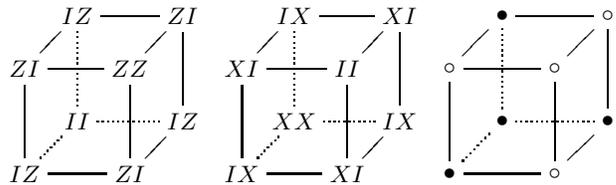

\centering
\begin{tabular}{ccc}
$ \drawgenerator{ZI}{ZZ}{IZ}{ZI}{IZ}{II}{ZI}{IZ} $
&
$ \drawgenerator{XI}{II}{IX}{XI}{IX}{XX}{XI}{IX} $
&
$ \drawgenerator{\circ}{\circ}{\bullet}{\circ}{\bullet}{\bullet}{\circ}{\bullet} $
\end{tabular}
\caption{Stabilizer generator for Code 1 in \cite{Haah:2011}.
This translation-invariant model exhibits TQO, and obeys no-strings rule with $\alpha=15$.
The diagram on the right is a cube in the dual lattice.
The filled dots indicate the defects created by $XI$ at the center of the cube,
which we call level-$0$ pyramid.}
\label{fig:code1_generator_pyramid}
\end{figure}
Our lower bound on the energy barrier is optimal up to a constant factor.
As we noted earlier, Code 1 in \cite{Haah:2011} exhibits TQO
with $L_{tqo}\sim L$  and obeys the no-strings rule with a constant $\alpha=15$.
This code can be described by a stabilizer Hamiltonian Eq.~\eqref{H} with
two qubits per site and two stabilizer generators per cube as shown on Fig.~\ref{fig:code1_generator_pyramid}.
We shall use a notation $PQ_u$ for a two-qubit Pauli operator $P\otimes Q$ applied to the pair of  qubits located at site $u$.
A single-qubit error $XI_u$ creates a cluster of $4$ defects (violated generators of $Z$-type)
located at a set of cubes $S^{(0)}_c=\{c,c+\hat{x},c+\hat{y},c+\hat{z}\}$, where the center of $c$
is obtained from $u$ by a translation $-(\hat{x}+\hat{y}+\hat{z})/2$, see Fig.~\ref{fig:code1_generator_pyramid}.
We shall refer to $S^{(0)}_c$ as a {\em level-$0$ pyramid with an apex} $c$.
For any integer $p\ge 0$ define a {\em level-$p$ pyramid} with an apex $c$
as a cluster of $4$ defects located at a set of cubes  $S^{(p)}_c=\{c,c+2^p\hat{x},c+2^p\hat{y},c+2^p\hat{z}\}$.
We note that a level-$(p+1)$ pyramid with an apex $c$ can be represented as a sum (modulo two)
of four level-$p$ pyramids with an apex at $c$, $c+2^p\hat{x}$,  $c+2^p\hat{y}$, and  $c+2^p\hat{z}$.
Therefore,
$S^{(p)}_c$ can be created from the vacuum by an error $E^{(p)}_u$ defined recursively as
\[
E^{(p+1)}_u = E^{(p)}_u E^{(p)}_{u+2^p\hat{x}} E^{(p)}_{u+2^p\hat{y}} E^{(p)}_{u+2^p\hat{z}}
\]
with $E^{(0)}_u=XI_u$. Simple induction shows that $E^{(p)}_u$ acts on $4^p$ qubits. Its support can be
regarded as a fractal object with a fractal dimension $2$.

Suppose that the lattice has periodic boundary conditions and its linear size is $L=2^n$
for some integer $n$. Then
the four defects of a level-$n$ pyramid cancel each other,
i.e., $S^{(n)}_c=\emptyset$.  It shows that
the error $E^{(n)}_u$ is either a stabilizer or a logical operator.
Consider an auxiliary operator $\overline{Z}_u=\prod_{i,j=1}^{L} ZI_{u-\hat{x}+i\hat{y}+j\hat{z}}$.
One can easily check that $\overline{Z}_u$ commutes with all stabilizer generators
of $X$-type, see Fig.~\ref{fig:code1_generator_pyramid}. On the other hand,
$\overline{Z}_u$ anti-commutes with $E^{(n)}_u$ since their supports overlap
only on a single site $u+(2^n-1)\hat{x}=u-\hat{x}$ and
their restrictions on this site ($ZI$ and $XI$ respectively) anticommute.
It shows that $E^{(n)}_u$ is a logical operator.
Let us now show that $E^{(n)}_u$ has energy barrier at most $4+4\log{L}$.

We claim that the error $E^{(p)}_u$ creating a level-$p$ pyramid $S^{(p)}_c$
has energy barrier $\omega_p=4p+4$. Below we shall only consider
error  paths implementing single-qubit factors of $E^{(p)}_u$ in a certain order.
We shall use induction to prove the following statement:
{\em $E^{(p)}_u$ can be implemented by a sequence of single-qubit errors
without creating more than $\omega_p$ defects. The apex $c$ contains a defect
after each error in this sequence. }
For $p=0$, the claim is obvious.
Suppose we have already proved the claim for $p=0,\ldots,q$.
Let us build a level-$(q+1)$ pyramid $S^{(q+1)}_c$.
We start from building a level-$q$ pyramid $S^{(q)}_c$,
which requires at most $4q+4$ defects and the apex $c$ contains a defect
after each elementary error. Now
we can build another level-$q$ pyramid
$S^{(q)}_{c+2^q\hat{x}}$ using at most $4q+4$ defects
plus the defects that were already present in $S^{(q)}_c$.
Although $S^{(q)}_c$ has $4$ defects, one of them is located at the apex of $S^{(q)}_{c+2^q\hat{x}}$.
The induction hypothesis implies that any partial implementation
of $S^{(q)}_{c+2^q\hat{x}}$ also contains a defect at $c+2^q\hat{x}$. These two defects
cancel each other and must be subtracted from the total cost.
Thus we can build two nearby pyramids at cost of $4+(4q+4)-2=4q+6$ defects.
Both pyramids together have $6$ defects.
The same argument shows that
building the third pyramid $S^{(q)}_{c+2^{q}\hat{y}}$ requires at most $6+(4q+4)-2=4q+8$ defects,
and the three pyramids together have $6$ defects.
Building the last pyramid $S^{(q)}_{c+2^{q}\hat{z}}$
requires at most $6+(4q+4)-2=4q+8$ defects.
Thus, we have constructed a pyramid $S^{(q+1)}_c$ using at most $4(q+1)+4$ defects.
Note that the apex $c$ was occupied by a defect over all stages of this construction.
It proves the step of induction.
We conclude that a logical operator $E^{(n)}_u$ has energy barrier at most
$4n+4 =4+4\log{L}$.

{\bf Separation of charged clusters.}
We have shown that in a process of isolating a charged cluster, there is a logarithmic energy barrier.
The following theorem quantifies how long the process must be.

\begin{theorem}
Let $E$ be a Pauli operator creating $S$, a neutral cluster of defects
containing a charged cluster $S' \subseteq S$ of diameter $r$
such that there are no other defects within distance $R$ from $S'$.
If $r+2R < L_{tqo}$, then the weight of $E$ must be $\ge c R^\gamma$ for some constant $\gamma > 1$ and $c$.
\label{thm:separation_charged_defects}
\end{theorem}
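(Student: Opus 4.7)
The plan is to reuse the renormalization-group hierarchy from the proof of Theorem~\ref{thm:main'}, but track the number of qubits $E$ acts on rather than only the maximum syndrome size. The crucial setup is to choose a non-cancelling error path: given $E$ with weight $w$, I write $E=E_{i_1}E_{i_2}\cdots E_{i_w}$ as a product of its distinct single-qubit factors, and use this as an error path of length $T=w$. Since no qubit is touched twice, no pair of level-$0$ errors can cancel, and the weight of every level-$p$ error in the hierarchy—viewed as a product of level-$0$ Paulis prior to multiplying by any stabilizer—equals exactly the number of level-$0$ errors occurring in its time interval. In particular, summing at any fixed level $p$ gives $w=\sum_{(S',S'')}\mathrm{weight}(E_{S',S''})$.

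With this path, I would run the RG argument of Theorem~\ref{thm:main'} unchanged to obtain $p_{max}=\Omega(\log R)$ from the hypothesis $r+2R<L_{tqo}$; as in that proof, the alternative case $m\sim\log R$ yields $w\ge m\sim\log R$ for free, so one may assume $m\ll\log R$. The heart of the argument is then a \emph{multiplicative} bound linking consecutive RG levels. The scale-invariant no-strings rule (in the form used in Appendix~A and in the discussion following Lemma~\ref{lemma:RG2}) says that an isolated charged level-$p$ cluster cannot be transported over distance $\xi(p+1)=(10\alpha)\xi(p)$ without creating additional defects. I would use this to show that each nontrivial level-$(p+1)$ error—specifically, each one whose net effect includes isolating or transporting a charged coarse-grained cluster across a level-$(p+1)$ site—must factor into at least $k$ distinct level-$p$ errors in its time interval, for some constant $k>1$ depending only on $D$ and $\alpha$. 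Chaining this bound from the single level-$p_{max}$ error $E$ down to level $0$ produces
\[
w \;\geq\; k^{p_{max}} \;\geq\; c\,R^{\gamma},\qquad \gamma=\frac{\log k}{\log(10\alpha)}.
\]

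The main obstacle is ensuring $\gamma>1$ rather than merely $\gamma\ge 1$. A naive accounting that only records diameters at each RG level gives $w\gtrsim \xi(p_{max})\sim R^{O(1)}$ with no guarantee that the exponent exceeds one: one would just recover a linear-in-$R$ bound. What is needed is a genuine branching factor $k$ strictly larger than the scale factor $10\alpha$ of the no-strings rule, reflecting the fractal nature of charged-cluster creation, exactly as the Haah pyramid $E^{(p)}_u$ of weight $4^p$ (realizing a cluster of diameter $2^p$) exhibits the fractal exponent $\gamma=2$. Extracting such a gap between $k$ and $10\alpha$ uniformly from the no-strings rule—possibly after an extra coarse-graining of the lattice together with a retuning of the constants in the definition of $\xi(p)$—is the technical point on which I would focus all the effort; modulo that step, Theorem~\ref{thm:separation_charged_defects} follows immediately from the chaining argument above.
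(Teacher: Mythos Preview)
Your proposal has a genuine gap that you yourself flag: the multiplicative branching bound $k>10\alpha$ is asserted but not proven, and the whole conclusion $\gamma>1$ hinges on it. Within the syndrome-history RG there is no mechanism forcing a level-$(p+1)$ interval to contain many level-$p$ intervals; two consecutive syndromes can both be dense at level $p$, giving $k=1$. The no-strings rule, as used in Appendix~A, only \emph{locks} charged clusters in place and localizes errors---it does not by itself produce a supra-linear number of level-$p$ transitions per level-$(p+1)$ transition. So ``modulo that step'' is really ``modulo the entire content of the theorem.''

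The paper's proof takes a completely different route that avoids this difficulty. It abandons the temporal RG on syndrome histories and works \emph{spatially} on the support $S$ of the minimal $E$. First, it observes that $S$ must contain a path of length $\ge R$ joining $S'$ to $S\setminus S'$ (otherwise the restriction of $E$ near $S'$ would create $S'$ locally, contradicting chargedness). Then it introduces a notion of \emph{maximal level-$p$ chunk}---roughly, a connected piece of $S$ of diameter $<w^p$ containing a path of length $\sim w^p$ through a midpoint that sees its full connected neighborhood in $S$. Two lemmas do the work: (i) any path in $S$ of length $lw^p$ meets $l$ disjoint maximal level-$p$ chunks; (ii) each maximal level-$(p+1)$ chunk decomposes into at least $w+1$ maximal level-$p$ chunks. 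The second lemma is where the no-strings rule enters: if the boundary of the chunk's core $C^\circ$ is confined to two thin slabs, the restriction $E|_{C^\circ}$ is a string segment, so the rule forces $C^\circ$ to be wide in a transverse direction, yielding an extra branch. Iterating gives $|S|\ge (w+1)^p$ with $w^p\sim R$, hence $\gamma=\log(w+1)/\log w>1$ for any fixed large $w$. The strict inequality $\gamma>1$ is thus automatic from $w+1>w$, rather than requiring a delicate comparison of a branching factor against the RG scale $10\alpha$.
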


The proof again makes use of renormalization group,
and shows that there is a subset of `fractal dimension' $\gamma > 1$ in the support of $E$.
We assume that $E$ has weight minimum possible.
Let $w$ be a odd positive number.
We say a set of sites $C \subset \Lambda$ is a \emph{level-$p$ chunk} if $\mathrm{diam}(C) < w^p$.
A \emph{path} in the lattice is a finite sequence of sites $(u_1,u_2,\ldots,u_n)$ such that $d(u_i,u_{i+1})=1$.
(Recall that we use the $l_\infty$ metric $d$.)
Equipped with paths, we can say whether a set is connected.

\begin{dfn}
A connected level-$p$ chunk $C \subseteq S$ is \emph{maximal} with respect to a set of sites $S$
if there exist a connected subset $C^\circ \subseteq C$
and a path $\zeta = (u_1,\ldots,m,\ldots,u_n) \subseteq C^\circ$ satisfying
\begin{enumerate}
 \item[(i)]   $d(u_1,u_n)=w^p-w^{p-1}$,
 \item[(ii)]  $d(u_1,m),d(u_n,m)\ge \frac{w^p-w^{p-1}}{2}$,
 \item[(iii)]  $C^\circ$ contains the connected component of $m$ in $\mathcal{B}_{\frac{w^{p}-w^{p-1}}{2}}(m) \cap S$, and
 \item[(iv)] $C$ contains the connected component of $C^\circ$ in $\mathcal{B}_{\frac{w^{p-1}}{2}}(C^\circ)\cap S$.
\end{enumerate}
\end{dfn}

The last two conditions restricts the position of $\zeta$ in $C$ such that $\zeta$ lies sufficiently far from the boundary of $C$.
The site $m$ will be refereed to as a \emph{midpoint} of $C$.
Let $S$ be the support of the Pauli operator $E$,
any restriction of which obeys no-strings rule.
\begin{lemma}
Given a path $\zeta$ in $S$ joining $u_1$ and $u_n$ such that $d(u_1,u_n)=lw^p-1$,
there are $l$ disjoint maximal chunks of level-$p$ whose midpoints are on $\zeta$.
\label{lem:max_chunks_on_path}
\end{lemma}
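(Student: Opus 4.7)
My plan is to pick $l$ midpoints $m_1,\ldots,m_l$ on $\zeta$ that are pairwise $l_\infty$-separated by at least $w^p$, then build each $C_i$ by the two-step ball expansion from $m_i$ prescribed in the definition, and finally extract the witness subpath from $\zeta$ itself.

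For the midpoints I would use a projection argument. Since $d(u_1,u_n)=lw^p-1$ in $l_\infty$, some coordinate direction (WLOG $x$) satisfies $x(u_n)-x(u_1)=lw^p-1$. Because each step of $\zeta$ moves $x$ by at most one, every integer $x$-value in $[x(u_1),x(u_n)]$ is hit somewhere on $\zeta$. For $i=1,\ldots,l$, let $x_i=x(u_1)+(i-1)w^p+(w^p-1)/2$ (an integer, since $w$ is odd) and take $m_i$ to be the first point on $\zeta$ with $x(m_i)=x_i$. Then $|x(m_i)-x(m_j)|=|i-j|w^p$, and so $d(m_i,m_j)\ge|i-j|w^p\ge w^p$ whenever $i\neq j$.

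For each $i$, I would then set $C_i^\circ$ equal to the connected component of $m_i$ in $\calB_{(w^p-w^{p-1})/2}(m_i)\cap S$ and $C_i$ equal to the connected component of $C_i^\circ$ in $\calB_{w^{p-1}/2}(C_i^\circ)\cap S$, matching the definition verbatim. The triangle inequality gives $C_i\subseteq\calB_{(w^p-1)/2}(m_i)$, so $\diam(C_i)<w^p$ and $C_i$ is a legitimate level-$p$ chunk. Disjointness is immediate: $d(m_i,m_j)\ge w^p$ forces the enclosing balls around distinct midpoints to be disjoint. To exhibit the witness path, note that $m_i$ lies at $x$-distance at least $(w^p-1)/2\ge(w^p-w^{p-1})/2$ from both $u_1$ and $u_n$, so walking along $\zeta$ outward from $m_i$ in each direction must eventually leave $\calB_{(w^p-w^{p-1})/2}(m_i)$. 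Take $u_{1,i}$ and $u_{n,i}$ to be the last points before the respective first exits; each satisfies $d(\,\cdot\,,m_i)=(w^p-w^{p-1})/2$, and the subpath of $\zeta$ between them stays in $S$ and in the inner ball, hence inside $C_i^\circ$, passing through $m_i$. Conditions (ii)--(iv) then hold by construction.

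The main obstacle I anticipate is enforcing the exact equality $d(u_{1,i},u_{n,i})=w^p-w^{p-1}$ in condition (i): since $\zeta$ may wander, the naive choice above only gives $d(u_{1,i},u_{n,i})\le w^p-w^{p-1}$ by the triangle inequality. The $x$-projection ensures the two exits happen on opposite sides of $m_i$ in the $x$-coordinate, so equality can be forced by picking the crossing at which the $x$-component achieves the maximum. If no subpath of $\zeta$ supplies suitable endpoints, the flexibility built into the definition (the witness $\zeta_i$ need only live in $C_i^\circ$, not in the original $\zeta$) allows one to pick $u_{1,i},u_{n,i}$ directly in $C_i^\circ$ on opposite faces of the inner sphere and then join them through $m_i$ using the connectivity of $C_i^\circ$. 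Everything else reduces to routine bookkeeping once the midpoint construction above is fixed.
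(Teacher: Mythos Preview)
Your approach is essentially the paper's: project onto a coordinate that realizes the $l_\infty$ distance, slice into $l$ pieces of width $w^p$, and manufacture one maximal chunk per piece. The paper, however, reverses your order of operations. It first fixes the $l$ slabs between hyperplanes $P_{i-1},P_i$ perpendicular to the distinguished axis, extracts in each slab a subpath $\zeta_i\subseteq\zeta$ whose endpoints sit on the two faces (so their separation in that coordinate is already $w^p-1$), and only then picks $m_i$ on $\zeta_i$. The chunk is then built as the ball-component around $m_i$ together with whatever portion of $\zeta_i$ is needed; disjointness is argued via the slabs rather than via balls around the midpoints.

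This order matters precisely at the place you flag as the obstacle. By committing first to the subpath $\zeta_i$ whose endpoints are far apart \emph{in the distinguished coordinate}, the paper gets the witness path (and hence conditions (i)--(ii)) for free, and can then enlarge $C^\circ$ to swallow the inner-ball component required by (iii) without endangering disjointness, since everything added stays in the slab. Your midpoint-first route forces the witness path to live inside the inner-ball component $C_i^\circ$, and neither of your proposed fixes closes the gap: the ``exits on opposite $x$-sides'' claim is only guaranteed going \emph{backward} from $m_i$ (because $m_i$ is the first point at $x=x_i$, all earlier points have $x<x_i$), but going forward $\zeta$ may immediately wander in $y$ or $z$ and exit the ball with $x$-coordinate still near $x_i$; and the ``pick antipodal points on the inner sphere'' fallback assumes $C_i^\circ$ reaches two opposite faces of $\calB_{(w^p-w^{p-1})/2}(m_i)$, which nothing in your construction guarantees. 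The cleanest repair is exactly the paper's: choose the subpath before the midpoint, and use slabs rather than balls for disjointness.
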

\begin{lemma}
For sufficiently large $w$,
a maximal level-$(p+1)$ chunk $C$ with respect to $S$
admits a decomposition into $w+1$ or more maximal chunks of level-$p$ with respect to $S$.
\label{lem:many_max_in_max}
\end{lemma}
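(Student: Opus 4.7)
The plan is to exhibit $w+1$ pairwise disjoint maximal level-$p$ chunks inside the level-$(p+1)$ chunk $C$. The strategy combines Lemma~\ref{lem:max_chunks_on_path} applied to a long path in $C$ with an additional off-path chunk whose existence is forced by the no-strings rule assumed on restrictions of $E$.

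First I would extend the defining path $\zeta$ of $C$ at both ends inside $S\cap C$. Since $C$ is the connected component of $C^\circ$ in $\mathcal{B}_{w^p/2}(C^\circ)\cap S$, nearly $w^p/2$ of slack is available past each endpoint of $\zeta$; combined with the length $w^{p+1}-w^p$ of $\zeta$ itself, this yields a path $\zeta^*\subseteq C\cap S$ whose endpoints lie at distance close to $\mathrm{diam}(C)$. The bound $\mathrm{diam}(C)<w^{p+1}$ caps this extended distance at $w\cdot w^p-1$. Applying Lemma~\ref{lem:max_chunks_on_path} to $\zeta^*$ at level $p$ then produces $w$ disjoint maximal level-$p$ chunks $C_1,\dots,C_w$ whose midpoints are evenly spaced along $\zeta^*$; pairwise disjointness comes from each $C_i$ being contained in the ball of radius $w^p/2$ about its midpoint, together with midpoint spacing of order $w^p$.

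The crux is producing a $(w+1)$-th chunk whose midpoint lies off $\zeta^*$. The inner part $C^\circ$ is defined as the full connected component of the level-$(p+1)$ midpoint $m$ in $\mathcal{B}_{(w-1)w^p/2}(m)\cap S$, so a priori it contains much more of $S$ than the single path $\zeta$. I would seek $m^*\in C^\circ$ at distance exceeding $w^p$ from every midpoint already placed on $\zeta^*$. If no such $m^*$ existed, then $S\cap C^\circ$ would be concentrated in a tube of width $\le w^p$ around $\zeta^*$; the restriction of $E$ to this tube would then be a logical string segment anchored at the two ends of $\zeta^*$, of aspect ratio of order $w$, with charged defect clusters at each anchor. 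For $w>\alpha$ this would violate the no-strings rule, a contradiction. Hence $m^*$ exists and the maximal level-$p$ chunk built at $m^*$ is disjoint from $C_1,\dots,C_w$.

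The main obstacle is this last step: translating the absence of long string-like operators into the concrete combinatorial statement that $S\cap C^\circ$ necessarily spreads beyond a single path. Making this implication quantitative is what demands \emph{sufficiently large} $w$; one needs $w>\alpha$ together with a dimension-dependent margin so that the scale $w^p$ at which the off-path midpoint is sought matches the scale on which the no-strings rule directly applies. The same choice of $w$ also absorbs the small rounding losses in arranging the $w$ on-path midpoints of Step~2.
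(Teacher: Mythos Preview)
Your high-level plan---harvest roughly $w$ level-$p$ chunks along the defining path via Lemma~\ref{lem:max_chunks_on_path}, then squeeze out the remaining chunk(s) using the no-strings rule---is the same strategy the paper uses. But two of your steps have genuine gaps.

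\textbf{Path extension.} You claim $\zeta$ can be lengthened by nearly $w^p/2$ at each end inside $S\cap C$, citing that $C$ contains the connected component of $C^\circ$ in $\mathcal{B}_{w^p/2}(C^\circ)\cap S$. That condition only says that \emph{if} $S$ has points past $u_1,u_n$ in that neighborhood, they lie in $C$; it does not guarantee such points exist. The support $S$ may simply terminate at $u_1,u_n$ within $C$. Without the extension, $d(u_1,u_n)=w^{p+1}-w^p=(w-1)w^p$, so Lemma~\ref{lem:max_chunks_on_path} delivers only $w-1$ chunks, and your single off-path chunk leaves you one short.

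\textbf{The no-strings step.} You argue that if no off-path $m^*$ exists, then $C^\circ$ lies in a tube around $\zeta^*$ and the restriction of $E$ to this tube is a logical string segment with charged anchors. But the restriction of $E$ to a region $U$ creates defects along $\partial U=\mathcal{B}_1(U)\cap U^c\cap S$, and you have not shown $\partial U$ is confined to two small cubes at the ends. Knowing $C^\circ\subseteq\text{tube}$ controls $S$ only inside the ball $\mathcal{B}_{(w^{p+1}-w^p)/2}(m)$; outside that ball (in particular along your extended portions of $\zeta^*$) $S$ may exit the tube anywhere, scattering the syndrome of the restriction. Nor do you argue that the anchor clusters are charged, which is what the no-strings rule actually forbids.

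The paper fills precisely these holes by a case split on where $\partial C^\circ$ lies. If $\partial C^\circ$ is confined to thin slabs $M,N$ near $u_1,u_n$, then $E|_{C^\circ}$ genuinely is a string segment, and the no-strings rule forces a transverse $z$-plane section of $C^\circ$ of diameter $\gtrsim w^p$, supplying two extra well-separated sites $v_1,v_2$. Otherwise there is a boundary-adjacent site $s\in C^\circ$ far from both endpoints; together with $m$, this gives a branching network of paths in $C^\circ$ from which one extracts enough pairwise-separated subpaths to apply Lemma~\ref{lem:max_chunks_on_path} several times. In both branches one obtains at least $w+1$ disjoint level-$p$ chunks after fixing auxiliary constants $\eta,\eta'$ and taking $w$ large. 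Your sketch is essentially an attempt at the first branch only, but via a shortcut that never pins down $\partial C^\circ$.
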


The proofs of Lemma~\ref{lem:max_chunks_on_path},~\ref{lem:many_max_in_max} can be found in Appendix~B.
The support of the minimal Pauli operator $E$ in Theorem~\ref{thm:separation_charged_defects}
must admit a path connecting $S'$ and $S \setminus S'$.
Otherwise, $S'$ can be regarded as being created locally,
and our topological order condition demands the cluster be neutral.
Since the path has length $\ge R$,
Lemma~\ref{lem:max_chunks_on_path} says we have a maximal chunk of level-$p$
where $p$ is such that $w^p \le R < w^{p+1}$.
Lemma~\ref{lem:many_max_in_max} implies any maximal chunk of level-$p$ must contain
at least $(w+1)^p$ sites.
This proves Theorem~\ref{thm:separation_charged_defects} with $\gamma=\frac{\log(w+1)}{\log w}>1$.
A similar argument proves the lower bound $d=\Omega(L^\gamma)$ on the code distance $d$ of Code 1 in \cite{Haah:2011}
since the minimal logical operator must contain a path of length $L$.

\section{Conclusions and open problems}

We proposed an argument in favor of self-correcting properties for
a class of 3D spin Hamiltonians with topological quantum order.
Thermal diffusion of defects in these Hamiltonians is suppressed
by the presence of logarithmic energy barriers.  A novel technique for
proving lower bounds on the energy barriers is presented.

Our results rise several questions. Firstly, Theorem~\ref{thm:main'}
implies that  the energy landscape contains a macroscopic number of
local minimums  separated
by macroscopic energy barriers. These minimums correspond to
low-lying excited states with only a few defects such that separation between
defects is of order $L$.  Such energy landscape suggests a possibility of
a spin glass phase at sufficiently low temperature. We note that a
spin glass phase can indeed be realized for some classical spin Hamiltonians
with logarithmic energy barriers such as the model discovered by Newman and Moore~\cite{NewmanMoore:1999}.
Interplay between topological order and spin glassiness has been studied recently by
several authors~\cite{Chamon:2005,Tsomokos:2010}.
Secondly, our paper leaves open the question of how to perform error correction necessary  to
extract encoded information from a memory. In particular, no efficient error correction
algorithm is known for the stabilizer code discovered in~\cite{Haah:2011}.
One could speculate however that locality and a macroscopic distance of the code
are sufficient conditions for having a non-zero error threshold under
random independent errors.  Finally, an exciting open question is whether
any 3D stabilizer code Hamiltonian with TQO has point-like topological defects.
A proof of this conjecture would give an evidence that  ``strong self-correction'' similar to the one
in the 4D toric code is impossible for  realistic physical systems.

\section*{Acknowledgments}
We would like to thank David DiVincenzo, John Preskill, Barbara Terhal, and Beni Yoshida for helpful discussions.
SB acknowledges hospitality of the  Institute for Quantum Information, Caltech, where part of this work was done.
SB was supported in part by the DARPA QuEST program under contract number HR0011-09-C-0047, and
JH in part by
NSF No. PHY-0803371,
ARO No. W911NF-09-1-0442,
DOE No. DE-FG03-92-ER40701,
and the Korea Foundation for Advanced Studies.

\section*{Appendix A}

\setcounter{lemma}{1}
In this section we prove Lemma~\ref{lemma:RG2}.
For convenience of the reader, we repeat the statement of the lemma.
\begin{lemma}
Let $S'\equiv S(t')$ and $S''\equiv S(t'')$ be a consecutive pair of syndromes in the level-$p$ syndrome history.
Let $E$ be the product of all errors $E_j$ that occurred between $S'$ and $S''$.
If $4m (2+\xi(p)) < L_{tqo}$,
then there exists an error $\tilde{E}$  supported on $\calB_{\xi(p)}(S'\cup S'')$
such that $E\tilde{E}$ is a stabilizer.
\end{lemma}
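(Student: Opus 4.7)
The plan is to induct on the RG level $p$, mirroring the hierarchical structure shown in Fig.~\ref{fig:RG}. For the base case $p=0$, a consecutive pair in the level-$0$ syndrome history differs by a single-qubit error $E_j$ acting on some site $u$, and $E_j$ anticommutes only with generators supported on elementary cubes containing $u$. Therefore $u$ lies within distance $1=\xi(0)$ of every defect in $S'\triangle S''$, and the choice $\tilde E = E$ gives $E\tilde E=I \in \calG$.

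For the inductive step, I will assume the lemma holds at level $p-1$ and exploit the key structural fact that the intermediate level-$(p-1)$ syndromes $S_2,\ldots, S_{k-1}$ lying between $S'=S_1$ and $S''=S_k$ are all sparse at level $p-1$, since they are precisely those that failed to survive the promotion from level $p-1$ to level $p$. By the induction hypothesis, each level-$(p-1)$ error connecting $S_i$ to $S_{i+1}$ is equivalent modulo stabilizers to some $\tilde E_i$ supported on $\calB_{\xi(p-1)}(S_i \cup S_{i+1})$. Hence $E \equiv \prod_i \tilde E_i$ modulo stabilizers, and the remaining task is to prune this product so that its support lies inside $\calB_{\xi(p)}(S'\cup S'')$.

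The central geometric idea is to form \emph{worldlines} of clusters in spacetime: draw an edge between a cluster in $S_i$ and a cluster in $S_{i+1}$ whenever their $\xi(p-1)$-neighborhoods overlap. Sparseness at level $p-1$, which enforces separation greater than $\xi(p)=10\alpha\xi(p-1)$ between distinct clusters, implies that the support of each $\tilde E_i$ decomposes cleanly along the resulting worldline components. Any worldline component lying entirely within $\calB_{\xi(p)}(S'\cup S'')$ contributes as-is to the desired $\tilde E$. Any excursion component straying farther must be absorbed into a stabilizer: I will apply the second TQO condition to its (allegedly) neutral clusters, replacing their joint creation and annihilation by a local Pauli operator supported on the $1$-neighborhood of the smallest enclosing cube. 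The hypothesis $4m(2+\xi(p)) < L_{tqo}$ guarantees that the relevant enclosing cubes have side less than $L_{tqo}$, so the TQO condition applies.

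The main obstacle, and the place where the no-strings rule is indispensable, is proving that every excursion component contains only neutral clusters. Here I will use scale-invariance of the no-strings rule, which upgrades the level-$1$ statement to the following: an effective level-$(p-1)$ error $\tilde E_i$ that creates, destroys, or displaces a charged cluster by more than $\alpha\xi(p-1)$ while leaving only one companion cluster in its vicinity realizes a logical string segment with anchor regions of diameter $\xi(p-1)$ and aspect ratio exceeding $\alpha$, contradicting triviality. Tracing any putative charged cluster in an excursion component forward and backward in time along its worldline, it must either reconnect to $S'\cup S''$ (contradicting its status as an excursion) or else undergo a forbidden isolated creation or annihilation step. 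Ruling out the latter forces the excursion component to be entirely neutral, at which point the local cancellation described above completes the inductive step.
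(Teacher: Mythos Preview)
Your overall strategy matches the paper's: induction on $p$, exploiting level-$(p-1)$ sparsity of the intermediate syndromes, the no-strings rule to lock charged clusters near $S'$, and absorption of the remaining neutral piece into stabilizers. Your worldline-component organization is a mild repackaging of the paper's direct decomposition $S(t)=S^c(t)\cup S^n(t)$ into charged and neutral clusters, with the localized error split accordingly as $E_t=E^c_t\, E^n_t$.

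There is, however, a genuine gap in your absorption step. You propose to apply the second TQO condition to the neutral clusters in an excursion component and thereby replace the excursion by ``a local Pauli operator supported on the $1$-neighborhood of the smallest enclosing cube.'' But an excursion component is a spacetime object whose spatial footprint---the union over all intermediate times of the $\xi(p-1)$-neighborhoods of its clusters---can be arbitrarily large, since neutral clusters may wander across the entire lattice. The second TQO condition only lets you create each individual cluster $C$ by an operator supported on $\calB_1(C_{min}(C))$; it does not bound the support of the full product of restricted errors along the excursion, and nothing in your sketch rules out that product being a logical operator rather than a stabilizer. The paper closes this gap with a separate proposition: for the vacuum-to-vacuum chain $\mathrm{vac}\to S^n(1)\to\cdots\to S^n(\tau)\to\mathrm{vac}$ one introduces local creators $P_j$ for each $S^n(j)$ and observes that each telescoped operator $P_jP_{j+1}Q_{j+1}$ has trivial syndrome and support enclosable in at most $3m$ cubes of side $2+\xi(p)$; the first TQO condition, applied separately to each connected component of that support, then forces every such operator---and hence the whole product---to be a stabilizer. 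This step-by-step telescoping is precisely where the hypothesis $4m(2+\xi(p))<L_{tqo}$ is actually consumed, and your proposal does not supply it.

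Two smaller points. Your base case overlooks the possibility $S'=S''$ with $u\notin\calB_1(S'\cup S'')$; there $E$ has trivial syndrome, so TQO makes $E$ a stabilizer and one takes $\tilde E=I$ rather than $\tilde E=E$. And your invocation of no-strings is phrased for a single step $\tilde E_i$, whereas the locking argument requires forming the cumulative product $E_{close}$ of all level-$(p-1)$ errors lying within a fixed spatial neighborhood of $C_a(1)$ over the whole interval $[1,t_1]$, and recognizing that object as a logical string segment with anchors at $C_a(1)$ and $C_a(t_1)$.
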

\begin{proof}
Let us  use induction in $p$. The base of induction is $p=0$.
In this case $E=E_j$ is a single-qubit error. If the qubit acted on by $E$ does
not belong to $\calB_{1}(S'\cup S'')$, one must have $S'=S''$. It means that 
$E$ is a single-qubit error with a trivial syndrome. Topological order condition implies that 
$E$ is a stabilizer. Choosing $\tilde{E}=I$ proves the
lemma for $p=0$.

Suppose we have already proved the statement for some level $p$
and let us prove for level $p+1$.
Let $S'=S(t')$ and $S''=S(t'')$ be consecutive syndromes in the level-$(p+1)$ 
history. Consider first the trivial case when 
$S'=S(t')$ and $S''=S(t'')$ are also consecutive syndromes in the level-$p$ history.
Then $S'$ and $S''$ are connected by a single level-$p$
error $E$ which, by induction hypothesis, has support on $\calB_{\xi(p)}(S'\cup S'')$
(modulo stabilizers).
The latter is contained in $\calB_{\xi(p+1)}(S'\cup S'')$ which proves the induction step.

The non-trivial case is when there is at least one level-$p$ syndrome between $S'$ and $S''$.
The interval of the level-$p$ syndrome history between $S'$ and $S''$ can be represented
(after properly redefining the time variable $t$) as
\[
S'\xrightarrow{E_{lead}}
S(1) \xrightarrow{E_1}
S(2) \xrightarrow{E_2}
\cdots
 \xrightarrow{E_{\tau-1}}
S(\tau) \xrightarrow{E_{tail}}
S''.
\]
Here all syndromes $S(1),\ldots,S(\tau)$ are sparse at the level $p$ and
all transitions are caused by single level-$p$ errors.
The condition of sparsity implies that 
the set of elementary cubes occupied by  $S(t)$ has a unique
partition into a disjoint union of clusters $C_a(t)$ such that
each cluster has diameter at most $\xi(p)$ and the distance
between any pair of clusters (if any) is at least
\bea
\mathrm{dist}(C_a(t),C_b(t)) & \ge &  \xi(p+1)-2\xi(p)\ge (10\alpha -2)\xi(p) \nn \\
 &\ge &  8\alpha \xi(p). \nn
\eea
Represent any intermediate syndrome as a disjoint union
\be
\label{S1+}
S(t)=S^c(t) \cup S^n(t), \quad t=1,\ldots,\tau,
\ee
where $S^c(t)$ and $S^n(t)$ include all charged and all neutral clusters $C_a(t)$ respectively.
Let $g$ be  the number of clusters in $S^c(t)$. We claim  that $g$  does not depend on $t$.
Indeed, since a level-$p$ error $E_t$ acts on $\xi(p)$-neighborhood of $S(t) \cup S(t+1)$,
the sparsity condition implies that $E_t$
cannot  create/annihilate a charged cluster $C_a(t)$ from the vacuum, or map a charged
cluster  to a neutral cluster (or vice verse). The same argument shows that
each  cluster $C_a(t)\subseteq  S^c(t)$ can `move' at most by $\xi(p)$ per time step,
that is, we can parameterize
\[
S^c(t)=C_1(t) \cup \ldots \cup C_g(t)
\]
such that
a `world-line' of the $a$-th charged cluster
obeys the continuity condition
\be
\label{cont+}
\mathrm{dist}(C_a(t+1),C_a(t)) \le \xi(p).
\ee
We can now use the no-strings rule to show that
all charged clusters are `locked' near their
initial positions, so that their world-lines
are essentially parallel to the time axis.  More precisely, we claim that
\be
\label{locking+}
\mathrm{dist}(C_a(t), C_a(1)) \le \alpha  \xi(p)  \quad \mbox{for all $1\le t \le \tau$}.
\ee
Indeed, suppose Eq.~\eqref{locking+} is false for some $a$.
Using the continuity Eq.~\eqref{cont+} one can find a time step $t_1$
such that $\mathrm{dist}(C_a(t_1), C_a(1))>\alpha \xi(p)$ and
$\mathrm{dist}(C_a(t), C_a(1)) \le \alpha \xi(p)$ for all $1\le t<t_1$.
Let $E_{close}$ be the product of all level-$p$ errors $E_j$
that occurred between $S(1)$ and $S(t_1)$ within distance $(2+\alpha)\xi(p)$
from $C_a(1)$. Since all intermediate syndromes are sparse at level $p$,
the net effect of $E_{close}$ is to annihilate the  charged cluster  $C_a(1)$
and create the charged cluster $C_a(t_1)$.
Equivalently, applying $E_{close}$ to the vacuum creates a pair of charged clusters
$C_a(1)$ and $C_a(t_1)$.
However, this contradicts to the no-strings rule since
$C_a(1)$ and $C_a(t_1)$ have linear size at most $\xi(p)$ while
the distance between them is greater than $\alpha \xi(p)$.
Thus we have proved Eq.~\eqref{locking+}.

We will say that $\vec{x}\in \Lambda$ is {\em close to $S'$} iff $\vec{x}\in \calB_{\xi(p+1)}(S')$.
We will say that $\vec{x}\in \Lambda$ is {\em close to $S''$} iff $\vec{x}\in \calB_{\xi(p+1)}(S'')$.

Let $E_t$ be the level-$p$ error causing the transition from  $S(t)$ to $S(t+1)$,
where $t=1,\ldots,\tau-1$. Let
$E^c_t$ is the restriction of $E_t$ onto $\calB_{\xi(p)}(S^c(t)\cup S^c(t+1))$,
and $E^n_t$ is the restriction of $E_t$ onto $\calB_{\xi(p)}(S^n(t)\cup S^n(t+1))$.
The sparsity of $S(t)$ and localization of level-$p$ errors then implies that
\be
\label{EcEn+}
E_t=E^c_t \cdot E^n_t.
\ee
We claim that {\em any error $E^c_t$ is close to $S'$}. Indeed,
each cluster  in $S^c(1)$ is within distance $2\xi(p)$ from $S'$ since
otherwise a single level-$p$ error $E_{lead}$ would be able
to create a charged cluster from the vacuum.  Using Eq.~\eqref{locking+}
we infer that $C_a(t) \in \calB_{(2+\alpha)\xi(p)}(S')$ for all $a=1,\ldots,g$.
Therefore $E^c_t$ is close to $S'$.

We shall now define a `localized' leading error $\tilde{E}_{lead}$
that maps the syndrome $S'$ to $S^c(1)$ such that the support of
$\tilde{E}_{lead}$ is close to $S'$.
For any neutral cluster $C \in S^n(1)$ let $O'(C)$ be a Pauli operator
creating $C$
from the vacuum. By definition of a neutral cluster,
we can assume that $O'(C)$ has support in $\calB_1(C)$.
Set
\[
\tilde{E}_{lead} = E_{lead} \cdot G \cdot \prod_{C \in S^n(1) } O'(C),
\]
where $G\in \calG$ is a stabilizer to be chosen later.
Since $E_{lead}$ is a single level-$p$ error, we have two options:
\begin{enumerate}
\item[(a)] any cluster $C\in S^n(1)$ is within distance $\xi(p)$ from $S'$,
\item[(b)] there is exactly one cluster  $C_{far}\in S^n(1)$ such that the distance
between $C_{far}$ and $S'$ is greater than $\xi(p)$.
\end{enumerate}
In case~(a) we set $G=I$ if the support of $E_{lead}$ is close to $S'$
and $G=E_{lead}$ otherwise (if $E_{lead}$ is not close to $S'$, but
any cluster $C\in S^n(1)$ is within distance $\xi(p)$ from $S'$,
the error $E_{lead}$ creates no defects when applied to the vacuum,
that is, $E_{lead}$ must be a stabilizer). In case~(b) the error $E_{lead}$
creates a neutral cluster $C_{far}$ from the vacuum. In this case
we will set $G=E_{lead} O'(C_{far})$. We conclude that the support
of $\tilde{E}_{lead}$ is close to $S'$ and $\tilde{E}_{lead}$ maps
$S'$ to $S^c(1)$.

We can apply the same rules to the error $E_{tail}$ and the syndrome $S^n(\tau)$.
It allows us to find a stabilizer $G$ and define a localized error
\[
\tilde{E}_{tail} = E_{tail} \cdot G \cdot \prod_{C \in S^n(\tau) } O''(C),
\]
such that $\tilde{E}_{tail}$ maps $S^c(\tau)$ to $S''$ and the support of $\tilde{E}_{tail}$
is close to $S''$. The operator $O''(C)$ above creates
a neutral cluster $C\in S^n(\tau)$ from the vacuum.

We can now define a localized level-$(p+1)$ error $\tilde{E}$
whose support is close to $S'\cup S''$  as
\be
\label{tildeE_cluster}
\tilde{E}=\tilde{E}_{lead} \cdot E^c_1 \cdots E^c_{\tau-1} \cdot \tilde{E}_{tail}.
\ee
By construction, it describes an error path
\[
S'\xrightarrow{\tilde{E}_{lead}}
S^c(1) \xrightarrow{E^c_1}
S^c(2) \xrightarrow{E^c_2}
\cdots
 \xrightarrow{E^c_{\tau-1}}
S^c(\tau) \xrightarrow{\tilde{E}_{tail}}
S''.
\]
It remains to check that $E\cdot \tilde{E}$ is a stabilizer.
Combining Eq.~\eqref{EcEn+} and
Eq.~\eqref{tildeE_cluster} we
conclude that
\[
E\cdot \tilde{E} = (\tilde{E}_{lead} E_{lead}) \cdot E^n_1 \cdots E^n_{\tau-1} \cdot (\tilde{E}_{tail} E_{tail}).
\]
Applying  $E\cdot \tilde{E}$ to the vacuum generates the following chain of transitions:
\bea
\label{chain1_cluster}
\mathrm{vac} \xrightarrow{\tilde{E}_{lead} E_{lead}}
S^n(1) \xrightarrow{E^n_1}
S^n(2) \xrightarrow{E^n_2} \cdots && \nn \\
\cdots
 \xrightarrow{E^n_{\tau-1}}
S^n(\tau) \xrightarrow{\tilde{E}_{tail} E_{tail}}
\mathrm{vac}. &&
\eea
Here all transitions are caused by errors whose support can be enclosed
by at most $m$ cubes of linear size $2+\xi(p)$.
Each syndrome $S^n(t)$
consists of at most $m$ neutral clusters of diameter $\xi(p)$, i.e., it can be
created from the vacuum by an error 
 whose support can be enclosed
by at most $m$ cubes of linear size $2+\xi(p)$.
Now the statement that $E\cdot \tilde{E}$ is a stabilizer follows from the following proposition.
\begin{prop}
\label{prop:1}
 Let $Q_j$ be Pauli operators 
causing  a chain of transitions
\[
\mathrm{vac} \xrightarrow{Q_1} S_1 \xrightarrow{Q_2}  S_2 \xrightarrow{Q_3} \ldots \xrightarrow{Q_r} S_r
\xrightarrow{Q_{r+1}} \mathrm{vac}.
\]
Let $P_j$ be some Pauli operator creating the syndrome $S_j$ from the vacuum.
Suppose the support of any operator $P_j$ and any operator $Q_j$
can be enclosed by at most $n$ cubes of linear size $R$ such that 
$4nR<L_{tqo}$. Then the product $\overline{Q}\equiv Q_1\cdots Q_r Q_{r+1}$ is a stabilizer.
\end{prop}
\begin{proof}[\bf Proof of Proposition~\ref{prop:1}]
Let $\psi_0$ be any ground state. Define a sequence of states
\bea
\psi(1) &=& P_1 Q_1\cdot \psi_0, \nn \\
\psi(j+1) &=&  (P_j P_{j+1}) Q_{j+1} \cdot \psi(j), \nn \\
\psi(r+1) &=& Q_{r+1} P_{r} \cdot \psi(r), \nn
\eea
for $j=0,\ldots,r-1$.
Obviously,
\[
\psi(j)=P_j \cdot (Q_1 \cdots Q_j)\cdot  \psi_0 \quad \mbox{for $j=1,\ldots,r$}
\]
and $\psi(r+1) =\overline{Q}\cdot \psi_0$.
It follows that all states $\psi(j)$ are ground states, and the transition from $\psi(j)$ to
$\psi(j+1)$ can be caused by a Pauli operator
\[
O_j=P_j P_{j+1} Q_{j+1}.
\]
Let $M_j$ be the support of $O_j$. Obviously, $M_j$ can be enclosed by at most $3n$ cubes of linear size $R$.
Suppose first that $M_j$ is a connected set,
i.e., one can connect any pair of qubits from $M_j$ by a path $(u_1,\ldots,u_l)$
such that the distance between $u_a$ and $u_{a+1}$ is $1$. 
Then $M_j$ can be enclosed by a single cube of linear size at most $3nR$.
Our assumptions on $n$ and $R$, and the topological order condition then implies that $O_j$ is a stabilizer.
Suppose now that $M_j$ consists of several disconnected components $M_j^\alpha$,
such that the distance between any pair of components is at least $2$.
Let $O_j^\alpha$ be the restriction of $O_j$ onto a connected component $M_j^\alpha$.
Locality of the stabilizer generators $G_a$ and the fact that $O_j$ commutes with any $G_a$
implies that $O_j^\alpha$ commutes with any $G_a$. Furthermore, $M_j^\alpha$
can be enclosed by a cube of linear size at most $3nR$. 
Topological order condition then implies that $O_j^\alpha$ is a stabilizer. 
It follows that all operator $O_j$ are stabilizers, that is, $\psi(j+1)=\psi(j)$ for all $j$.
However, it means  that $\overline{Q}\, \psi_0=\psi_0$ for any ground state $\psi_0$,
that is, $\overline{Q}$ is a stabilizer.
\end{proof}
\end{proof}

\section*{Appendix B}
In this section we prove the technical lemmas needed
for Theorem~\ref{thm:separation_charged_defects}.
\begin{lemma}
Given a path $\zeta$ in $S$ joining $u_1$ and $u_n$ such that $d(u_1,u_n)=lw^p-1$,
there are $l$ disjoint maximal chunks of level-$p$ whose midpoints are on $\zeta$.
\end{lemma}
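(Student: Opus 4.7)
My plan is to build the $l$ disjoint maximal level-$p$ chunks by first selecting $l$ midpoints $m_1, \dots, m_l$ on $\zeta$ that are mutually far apart in the ambient $l_\infty$ metric, and then growing a chunk around each midpoint via the two-stage prescription in the definition of maximality.

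First, to choose the midpoints, I would consider the discrete Lipschitz function $f(u) = d(u_1, u)$ on the vertices of $\zeta$. Since consecutive vertices are at lattice distance $1$, the values of $f$ change by at most one per step; because $f(u_1) = 0$ and $f(u_n) = lw^p - 1$, a discrete intermediate-value argument produces $m_i \in \zeta$ with $f(m_i) = (i-1)w^p + \tfrac{w^p - w^{p-1}}{2}$, so that $d(m_i, m_j) \ge |f(m_i) - f(m_j)| = w^p|i-j| \ge w^p$ for $i \ne j$. For each $m_i$, walking along $\zeta$ from $m_i$ in both directions until the lattice distance to $m_i$ reaches $\tfrac{w^p - w^{p-1}}{2}$ carves out a sub-path $\zeta^{(i)} \subset \zeta \cap \mathcal{B}_{(w^p - w^{p-1})/2}(m_i) \subset S$. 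I would then define $C^\circ_i$ as the connected component of $m_i$ in $\mathcal{B}_{(w^p-w^{p-1})/2}(m_i) \cap S$ and $C_i$ as the connected component of $C^\circ_i$ in $\mathcal{B}_{w^{p-1}/2}(C^\circ_i) \cap S$, so that the containment clauses (iii)--(iv) of maximality follow from $\zeta^{(i)} \subset C^\circ_i \subseteq C_i$, and condition (ii) follows from the choice of $u_1^{(i)}, u_n^{(i)}$.

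Next, I would verify the chunk size and the disjointness. Using that $w$ is odd, so $\lfloor w^{p-1}/2 \rfloor = (w^{p-1}-1)/2$, the triangle inequality applied twice yields $\mathrm{diam}(C_i) \le (w^p - w^{p-1}) + (w^{p-1} - 1) = w^p - 1$, certifying $C_i$ as a level-$p$ chunk. If $C_i$ and $C_j$ share a point $x$, then unpacking the connected-component definitions produces $y_i \in C^\circ_i$, $y_j \in C^\circ_j$ with $d(x, y_i), d(x, y_j) \le (w^{p-1}-1)/2$ and $d(m_i, y_i), d(m_j, y_j) \le (w^p - w^{p-1})/2$; chaining gives $d(m_i, m_j) \le w^p - 1$, contradicting the midpoint spacing $\ge w^p$.

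The main obstacle I foresee is verifying condition (i), namely $d(u_1^{(i)}, u_n^{(i)}) = w^p - w^{p-1}$ as an equality, since my straight sub-path construction only guarantees $\le w^p - w^{p-1}$ (the triangle inequality through $m_i$ need not be tight). I would handle this by working with $f$ more delicately: pick $u_1^{(i)}, u_n^{(i)}$ as points on $\zeta$ whose $f$-values differ by exactly $w^p - w^{p-1}$, which forces $d(u_1^{(i)}, u_n^{(i)}) \ge w^p - w^{p-1}$, and then adjust $m_i$ along $\zeta$ to lie on a geodesic between them (or alternatively build the witness path inside $C^\circ_i$, whose connectedness and containment in $\mathcal{B}_{(w^p-w^{p-1})/2}(m_i)$ guarantee an ambient distance bound). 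The slack on $\zeta$ (total displacement $lw^p - 1$, while $l$ midpoints consume only order $l(w^p - w^{p-1})$) easily absorbs the adjustments without disturbing the $\ge w^p$ spacing that drives disjointness.
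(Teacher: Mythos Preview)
Your proposal is correct and follows essentially the same strategy as the paper: slice the path $\zeta$ into $l$ pieces using a $1$-Lipschitz scalar function, pick a midpoint on each piece via a discrete intermediate-value argument, and grow a maximal chunk around each midpoint according to the two-stage definition. The paper chooses the slicing function to be a coordinate projection (after observing that $d(u_1,u_n)=lw^p-1$ forces one coordinate, say $z$, to differ by $lw^p-1$) and argues disjointness by noting that each chunk is confined to its own slab between consecutive hyperplanes $z=iw^p$. You instead use the radial function $f(u)=d(u_1,u)$ and deduce disjointness from the triangle inequality and the spacing $d(m_i,m_j)\ge w^p$. Your route is coordinate-free and makes the diameter bound $\mathrm{diam}(C_i)\le w^p-1$ explicit via the parity of $w$, whereas the paper's slab picture makes disjointness more geometrically transparent but is somewhat looser about the diameter check. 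The obstacle you flag concerning the exact equality in condition~(i) is genuine---the paper's proof is equally casual on this point (it even records $d(u_{j_{i-1}},u_{j_i})=w^p-1$ rather than $w^p-w^{p-1}$)---and your proposed repair of first fixing $u_1^{(i)}$ and then walking along $\zeta$ until the distance to $u_1^{(i)}$ first hits $w^p-w^{p-1}$ is the right way to close it.
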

\begin{proof}
For convenience, we assume that the $z$-coordinates of $u_1$ and $u_n$ are
$0$ and $lw^p-1$, respectively.
Consider $l+1$ planes $P_i$ perpendicular to the $z$-axis, whose $z$-coordinates are $iw^p$ for $i=0,1,\ldots,l$.
In each region between the two consecutive planes $P_{i-1}$ and $P_i$,
there is a subpath $\zeta_i=(u_{j_{i-1}},\ldots,u_{j_{i}})$ such that $d(u_{j_{i-1}},u_{j_{i}})=w^p-1$.
Choose $m_i \in \zeta_i$ such that $d(u_{j_{i-1}},m_i),d(m_i,u_{j_{i}}) \ge \frac{w^p -1}{2}$.
Let $C_i$ be the connected component of $m_i$ within $S \cap \calB_{\frac{w^p}{2}}(m_i)$.
Add, if necessary, some points of $\zeta_i$ to $C_i$ to get a maximally connected $C'_i$.
This $C'_i$ is a maximal chunk of sites with midpoint being $m_i$.
Any two $C'_i$'s are disjoint since each of them lies in a unique region enclosed by $P_{i-1}$ and $P_i$.
\end{proof}

\begin{lemma}
For sufficiently large $w$,
a maximal level-$(p+1)$ chunk $C$ with respect to $S$
admits a decomposition into $w+1$ or more maximal chunks of level-$p$ with respect to $S$.
\end{lemma}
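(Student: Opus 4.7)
The plan is to build $w+1$ disjoint maximal level-$p$ chunks inside $C$ by first extending the path $\zeta$ into the $w^p/2$-fattening that condition~(iv) at level $p+1$ guarantees inside $S \cap C$, and then slicing the extended path with planes spaced at the exact level-$p$ sub-path length $w^p - w^{p-1}$ (rather than the coarser spacing $w^p$ used in Lemma~\ref{lem:max_chunks_on_path}). The arithmetic is forced by the identity $(w+1)(w^p-w^{p-1}) = w^{p+1}-w^{p-1}$, which is the smallest multiple of the level-$p$ sub-path length accommodating $w+1$ consecutive pieces and still lying strictly below the diameter bound $w^{p+1}$ on $C$, leaving a slack $w^{p-1}$ to absorb integer boundary effects; this is precisely where ``sufficiently large $w$'' enters.

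Concretely, I would first extend $\zeta$ at each endpoint by $l_\infty$-distance $(w^p-w^{p-1})/2$ inside $\mathcal{B}_{w^p/2}(C^\circ)\cap S$, obtaining a path $\tilde\zeta\subseteq S\cap C$ of endpoint $l_\infty$-distance at least $w^{p+1}-w^{p-1}$. Next, along the coordinate direction realizing this distance I would place $w+2$ perpendicular planes at spacing $w^p-w^{p-1}$, yielding $w+1$ disjoint slabs. In each slab, the first and last crossings of $\tilde\zeta$ with the bounding planes give a sub-path of $l_\infty$-endpoint-distance at least $w^p - w^{p-1}$, which can be trimmed by an intermediate value argument to have endpoint distance exactly $w^p-w^{p-1}$. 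Choosing a midpoint $m_i$ in the middle of each such sub-path, I would take $C_i^\circ$ to be the $S$-connected component of $m_i$ in $\mathcal{B}_{(w^p-w^{p-1})/2}(m_i)$ and $C_i$ its $w^{p-1}/2$-fattening inside $S$, and then verify conditions (i)--(iv) at level $p$. Disjointness of the $C_i$ follows from the confinement of each $C_i$ to its slab, exactly as in the proof of Lemma~\ref{lem:max_chunks_on_path}.

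The main obstacle is the extension step: condition~(iv) only asserts existence of fattening sites in $S\cap C$, not an actual path extension of $l_\infty$-length $(w^p-w^{p-1})/2$ in a prescribed direction. My plan is to argue geometrically using condition~(iii) and the connectedness of $C$: if the $S$-connected continuation of $u_1$ inside the fattening were trapped within $l_\infty$-radius strictly less than $(w^p-w^{p-1})/2$ of $u_1$, then combining this with the containment of $C^\circ$ in the ball $\mathcal{B}_{(w^{p+1}-w^p)/2}(m)$ dictated by condition (iii) would squeeze $C$ into an $l_\infty$-region too small to contain a path of length $w^{p+1}-w^p$ through $m$, contradicting the very existence of $\zeta$. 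The slack $w^{p-1}/2$ between the fattening width $w^p/2$ and the required extension $(w^p-w^{p-1})/2$, positive for all $w\ge 2$ and growing with $w$, is what makes this quantitative argument go through; making that threshold explicit pins down the exact meaning of ``sufficiently large $w$'' in the statement.
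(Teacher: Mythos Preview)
Your proposal has a genuine gap at the extension step, and the missing ingredient is the \emph{no-strings rule}, which you never invoke but which is the heart of the paper's argument.

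First, your reading of condition~(iii) is inverted: (iii) asserts that $C^\circ$ \emph{contains} the connected component of $m$ in $\calB_{(w^{p+1}-w^p)/2}(m)\cap S$, not that $C^\circ$ is contained in that ball. So the ``squeeze'' you try to derive from (iii) is not there. More fundamentally, nothing in conditions (i)--(iv) forces $S$ to continue past $u_1$ at all: if $S$ simply terminates at (or near) $u_1$, condition~(iv) adds nothing on that side and your extended path $\tilde\zeta$ cannot be built. The ``contradiction with the very existence of $\zeta$'' is not a contradiction, since $\zeta$ already lies entirely inside $C^\circ$ and says nothing about $S$ beyond its endpoints. A straight segment in $S$ ending at $u_1$ is a perfectly consistent picture as far as (i)--(iv) go.

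What rules out such thin, dead-ending configurations is precisely that $S$ is the support of a Pauli operator whose restrictions obey the no-strings rule, so $S$ cannot look like a long tube of bounded cross-section. The paper's proof exploits this directly. Rather than extending $\zeta$ linearly, it shows that $C^\circ$ must \emph{branch}: either the boundary $\partial C^\circ$ escapes the end-strips near $u_1,u_n$, producing a fourth far-away site $s$; or $\partial C^\circ$ is confined to those strips, in which case $C^\circ$ would be a logical string segment of large aspect ratio unless some transverse slice $P\cap C^\circ$ has diameter $\gtrsim w^p$, furnishing two far-apart sites $v_1,v_2$. Either way one obtains a tree of paths in $C^\circ$ with four well-separated leaves, and Lemma~\ref{lem:max_chunks_on_path} applied along the branches yields the required $w+1$ disjoint level-$p$ chunks. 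Your purely linear slicing of $\zeta$ (of length $w^{p+1}-w^p$) can account for at most about $w$ such chunks; the $(w{+}1)$-st genuinely requires the branching forced by the no-strings rule.
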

\begin{proof}
Recall that $S$ is the support of the Pauli operator $E$, any restriction of which obeys no-strings rule.
Define the \emph{boundary} of a subset $U$ of $S$ to be $\partial U = \calB_1(U) \cap U^c \cap S$.
Then, any subset $U$ of sites with boundary enclosed in a two disjoint regions can be regarded as a string segment.

By the definition of the maximal chunk,
there exists a path $(u_1,\ldots,m,\ldots,u_n)$ in $C^\circ \subseteq C$ such that $d(u_1,u_n)=w^{p+1}-w^p$.
We assume that the $z$-coordinates of $u_1, u_n$ differ by $w^{p+1}-w^p$.
We will show that there are sufficiently long and separated paths in $C^\circ$,
to which we apply Lemma~\ref{lem:max_chunks_on_path} to find $w+1$ maximal chunks of level-$p$.
They will lie in $\calB_\frac{w^p}{2}(C^\circ)$, and hence in $C$.

Let $M$ ($N$) be the subset of $S$ consisted of sites
whose $z$-coordinates differ from that of $u_1$ ($u_n$) by at most $\eta w^p$.
First, suppose $\partial C^\circ$ is not contained in $M \cup N$.
Since $u_1 \in M$ and $u_n \in N$, there is a site $s \in C^\circ$
adjacent (of distance 1) to $\partial C^\circ$ such that $d(s,u_1),d(s,u_n) > \eta w^p$.
Furthermore, $d(s,m) \ge \frac{w^{p+1}-w^p}{2}-1$;
Otherwise, $C^\circ$ contains a site in the boundary, which is a contradiction.

Consider the shortest network $\mathcal{N}$ of paths in $C^\circ$ connecting four sites $u_1,m,u_n,s$.
(The length of a network of paths is the number of sites in the union of the paths.)
Let $\zeta$ be the shortest path in $\mathcal{N}$ from $u_1$ to $u_n$.
If $s$ is not contained in $\calB_{3w^p}(\zeta)$,
then $\zeta' \subseteq \mathcal{N}$ joining $s$ to a site on $\zeta$
has a subpath $\zeta'' \subseteq \zeta'$ of diameter at least $2w^p$
such that $\zeta''$ is separated from $\zeta$ by $w^p$.
Applying Lemma~\ref{lem:max_chunks_on_path} to $\zeta$ and $\zeta''$,
we find at least $w+1$ maximal chunk of level-$p$.
If $m$ is not contained in $\calB_{3w^p}(\zeta)$,
a similar argument reveals at least $w+1$ maximal chunk of level-$p$.

Suppose both $s$ and $m$ are contained in $\calB_{3w^p}(\zeta)$.
Observe that $\zeta \setminus ( \calB_{4w^p}(s) \cup \calB_{4w^p}(m) )$
consists of three connected components $\zeta_1,\zeta_2,\zeta_3$,
two of which have diameter $\ge \frac{w^{p+1}-w^p}{2}-8w^p$ and the other has diameter $\ge (\eta - 4)w^p$.
Two distinct $\calB_{\frac{w^p}{2}}(\zeta_i)$ and $\calB_{\frac{w^p}{2}}(\zeta_j)$ ($i,j=1,2,3$)
do not overlap because of the minimality of $\zeta$. Applying Lemma~\ref{lem:max_chunks_on_path},
we find $w + \eta - 21$ maximal chunks of level-$p$.
Choosing $\eta > 21$, we get the desired result.

Next, suppose $\partial C^\circ$ is contained in $M \cup N$.
Let $s_M, s_N \in C^\circ \setminus (M\cup N)$ be sites adjacent to $M$ and $N$, respectively.
The separation between $M$ and $N$ is $(w-1-2\eta)w^p$.
If it is greater than $\eta' \alpha w^p$, there must be a $z$-plane $P$ that contains $s_M$ or $s_N$
such that $P \cap C^\circ$ has diameter $> \eta' w^p$; Otherwise, the no-strings rule is violated.
Let $v_1, v_2 \in P \cap C^\circ$ be sites separated by $\eta' w^p$.
The four sites, $u_1,u_n,v_1,v_2$ are sufficiently separated and connected by some paths in $C^\circ$.
Arguing as before, we find $(w -1) + \eta'-16$ maximal chunks of level-$p$.
The choice of $\eta' > 17$ and $w > 1+2\eta+\eta'\alpha$ proves the lemma.
\end{proof}

\bibliography{mybib}
\end{document}